\newcommand*{\etal}{et al.\@\xspace}
\newcommand\BDP{\textsc{BDP}}
\newcommand\RTT{\textsc{RTT}}
\newcommand\fair{$\Delta$-almost fair}
\newcommand\fairness{$\Delta$-almost fairness}
\newcommand\fairSymbol{\Delta}
\newcommand\fairLower{\Delta^{-1}}
\newcommand\fairUpper{\Delta}
\begin{document}
\title{Updating the Theory of Buffer Sizing}

\author{Bruce Spang}
\email{bspang@stanford.edu}

\author{Serhat Arslan}
\email{sarslan@stanford.edu}

\author{Nick McKeown}
\email{nickm@stanford.edu}

\renewcommand{\shortauthors}{Spang, Arslan, McKeown}

\begin{abstract}
    Routers have packet buffers to reduce packet drops during times of congestion. It is important to correctly size the buffer: make it too small, and packets are dropped unnecessarily and the link may be underutilized; make it too big, and packets may wait for a long time, and the router itself may be more expensive to build. Despite its importance, there are few guidelines for picking the buffer size. The two most well-known rules only apply to long-lived TCP Reno flows; either for a network carrying a single TCP Reno flow (the buffer size should equal the bandwidth-delay product, or $\BDP$) or for a network carrying $n$ TCP Reno flows (the buffer size should equal $\BDP/\sqrt{n}$). Since these rules were introduced, TCP Reno has been replaced by newer algorithms as the default congestion control algorithm in all major operating systems, yet little has been written about how the rules need to change. This paper revisits both rules. For the single flow case, we generalize the $\BDP$ rule to account for changes to TCP, such as Proportional Rate Reduction (PRR), and the introduction of new algorithms including Cubic and BBR. We find that buffers can be made 60-75\% smaller for newer algorithms.  For the multiple flow case, we show that the square root of $n$ rule holds under a broader set of assumptions than previously known, including for these new congestion control algorithms. We also demonstrate situations where the square root of $n$ rule does {\em not} hold, including for unfair flows and certain settings with ECN. We validate our results by precisely measuring the time series of buffer occupancy in a real network, and comparing it to the per-packet window size.
\end{abstract}

\maketitle

\section{Introduction}

Internet routers have packet buffers which reduce packet loss during times of congestion. Sizing the router buffer correctly is important: if a router buffer is too small, it can cause high packet loss and link under-utilization. If a buffer is too large, packets may have to wait an unnecessarily long time in the buffer during congested periods, often up to hundreds of milliseconds.
While an operator can reduce the operational size of a router buffer, the maximum size of a router buffer is decided by the router manufacturer, and the operator typically configures the router to use all the available buffers. Without clear guidance about how big a buffer needs to be, manufacturers tend to oversize buffers and operators tend to configure larger buffers than necessary, leading to increased cost and delay.

This paper revisits two widely used rules of thumb for sizing router buffers in the internet. The two rules cover two different cases: 

\noindent {\bf Case 1: When a network carries a single TCP Reno flow}. Van Jacobson observed in 1990 \cite{Jac1990} that a bottleneck link carrying a {\em single} TCP Reno flow requires a router buffer of size $B \ge \BDP$, the bandwidth-delay product, in order to keep the link fully utilized.

\noindent {\bf Case 2: When a network carries multiple TCP Reno flows}. Appenzeller, Keslassy, and McKeown argued in 2004 \cite{Appenzeller:2004fk} that a bottleneck link carrying $n$ long-lived TCP Reno flows requires a buffer of size $B \ge \BDP/\sqrt{n}$ in order to keep the link highly utilized.

Much has changed since these rules were first introduced, and it is not clear whether these rules still apply in modern networks. The behavior of TCP Reno has changed; most notably when Rate-Halving \cite{Hoe1995,SMM} and PRR~\cite{DMCG2011} were introduced. New types of congestion control have become widespread, such as Cubic~\cite{HRX2008a} (default in Linux, Android, and MacOS), and more recently BBR (deployed by Google for YouTube)~\cite{CCG+2017} and BBRv2 \cite{CCY+2019b}. Given that the analysis underlying both buffer sizing rules depends on the specific way in which TCP Reno halves the congestion window when losses are detected, there is no particular reason for either rule to still hold in today's internet.

Existing rules of thumb help us pick the buffer size to achieve full link utilization, and do not predict behavior if the buffer is made smaller. Thus, theory falls short for recent congestion control algorithms (e.g. BBR and BBRv2) which no longer aim to keep a bottleneck link running at 100\% utilization. Instead, they rely on short periods of under-utilization to keep queueing delay low and to estimate propagation delay.

In light of these changes, this paper examines buffer sizing for modern TCP algorithms. We show that the two rules still allow TCP Reno to fully utilize a link, despite changes due to Rate-Halving and PRR. We show that TCP Cubic, Scalable TCP, and BBR allows us to reduce buffer sizes. 

We extend our analysis for the case when link utilization is less than 100\%, and we show that very small buffers can still allow high (but not 100\%) link utilization. More generally, {\em this paper sheds new light on how to size buffers for a given congestion control algorithm and desired link utilization, under a very broad set of conditions}. In doing so, we also show how future congestion control algorithms can be designed to further reduce buffer requirements.

Throughout the paper, we will illustrate and validate our results using measurements drawn from a physical network in our lab. This is challenging: while Linux can capture per-packet measurements in the end-host TCP stack, it is not normally possible to capture the full time series of buffer occupancy at the switch. Our measurement setup uses a P4-programmable Tofino switch which we program to report the precise time evolution of its buffer, to approximately 1 nanosecond resolution. This lets us precisely compare the evolution of the congestion window and the buffer size and validate our theoretical results. We start in Section~\ref{sec:methodology} by describing this experimental setup.

In Section~\ref{sec:single-flow}, we revisit case \#1: a single TCP flow. Specifically, we show theoretically and experimentally that despite the introduction of PRR to TCP Reno, the $B \ge \BDP$ result still holds. We also show that with TCP Cubic  we can reduce the buffer size to $0.4\BDP$ for a single flow, and for BBR we can reduce it further to $0.25\BDP$ and still achieve full link utilization. We also describe how link utilization behaves when buffers are below these values.

Next, in Section~\ref{sec:multiple-flows}, we examine case \#2: multiple TCP flows.
We prove that algorithms which respond to full queues (via losses or marks) create standing queues with particular properties during times of congestion. We show that the square root of $n$ rule is a consequence of this behavior: if queues are always close to full for any buffer size, then intuitively the buffers can be shrunk without impacting utilization. Another, perhaps surprising, consequence is that link utilization is not a ``cliff'' function. We show that if the buffer is slightly too small for full utilization, then utilization still remains high. Specifically, we show that even with very small buffers, the utilization percent is at least $\Omega(1-1/\sqrt{n})$. We also show that design choices in BBR allow for a square root of $n$ rule essentially without assumptions.

In Section~\ref{sec:eval-sqrt-n} we support our results with fine-grained measurements from lab experiments. We show that the square root of $n$ rule holds in our experiment, and show how loss-based algorithms keep queues full during congestion. We verify the square root of $n$ result for BBR, and find experimentally that a $\BDP/n$ rule may be more accurate, allowing for even smaller buffers. We show that buffer sizes depend on how flows synchronize and whether they are fair. We describe how to check the assumptions in our theorems, and show that they hold in the lab. We also caution that enabling ECN can increase synchronization in certain cases, resulting in a larger required buffer, and show lab results to support our analysis.

In Section~\ref{sec:in-the-wild}, we discuss measurements of large-scale production networks which show that RTTs are persistently elevated during times of congestion~\cite{LDC+2014,FVD2017,DCG+2018}. Our results explain why congestion control algorithms create these persistently full queues, and suggest that these queues can be made smaller. %
We also discuss the implications of our results on other situations when it may be more difficult to measure congestion.

In Section~\ref{sec:recommendations} we make recommendations to network operators who are interested in running buffer sizing experiments, and describe how our results can be used to easily design these experiments. We also give recommendations to designers of new congestion control algorithms who would like to ensure small buffer requirements. We discuss related work in Section~\ref{sec:related} and conclude in Section~\ref{sec:conclusion}.

\noindent{\bf Contributions:} The main contributions of this paper are:
\begin{enumerate}
    \item Single flow case: A simple proof of TCP's required buffer size, applicable to the latest versions of TCP Reno, as well as other algorithms including Cubic, Scalable TCP, and BBR.
    \item Multiple flow case: A new, more general model of how buffer size is impacted by fairness and the amount of worst-case packet drops, and square root of $n$-style rules for TCP Reno and other algorithms.
    \item A better understanding of how congestion control algorithms interact with buffers, including how utilization depends on buffer size, how algorithms can reduce buffer requirements, and how current congestion measurement techniques rely on certain algorithmic behavior.
    \item A new measurement platform allowing precise observation of TCP and the router buffer.
\end{enumerate}

\begin{table*}
\begin{tabular}{|l|p{0.5\textwidth}|l|}
\hline
\textbf{Min. Buffer Size} & \textbf{Additional assumptions beyond Section~\ref{sec:prelim}} & \textbf{Citation}                                             \\ \hline
$\BDP$ & Reno, silence after loss & \cite{Jac1990,VS1994} \\
$\BDP$ & Reno & \cite{DHD2005}, Section~\ref{sec:new-bdp-rules} \\
$(1/b-1)\BDP$ & Multiplicative decrease by b, silence after loss & \cite{HR2008,LF2015,MAK2019a} \\
\rowcolor{lightgray} $(1/b-1)\BDP$ & Multiplicative decrease by b & Section~\ref{sec:new-bdp-rules}         \\
$\frac{3}{7} \BDP$ & Cubic & \cite{LF2015}, Section~\ref{sec:new-bdp-rules}         \\ 
\rowcolor{lightgray} $\tfrac{1}{7} \BDP$ & Scalable TCP & Section~\ref{sec:new-bdp-rules}         \\
\rowcolor{lightgray} $\tfrac{1}{4} \BDP$ & BBR during the probe bandwidth phase, with loss & Section~\ref{sec:new-bdp-rules}        \\
\hline
$\Theta(BDP/\sqrt{n})$ & Reno, windows are i.i.d. uniform random variables & \cite{Appenzeller:2004fk}                    \\
\rowcolor{lightgray} $O(\BDP/\sqrt{n})$ & $\sqrt{n} + O(n^2/BDP)$ almost fair flows see loss & Theorem~\ref{thm:our-sqrt-n-for-buffers}            \\
\rowcolor{lightgray} $O(\BDP/\sqrt{n})$ & Almost fair BBR flows in probe bandwidth phase & Theorem~\ref{thm:sqrt-n-bbr} \\ \hline
$O(p\cdot \BDP - n+np)$ & A $p$ fraction of fair flows see losses & \cite{DHD2005}    \\  
\rowcolor{lightgray} $O(s \cdot \BDP/n)$ & At most $s + n^2/BDP$ almost fair flows see loss. & Section~\ref{sec:desync}    \\  
\hline
$O(1)$ & Reno, bounded window size, Poisson pacing  & \cite{Enachescu:2006kp} \\ 
\rowcolor{lightgray} $O(1)$ & $\sqrt{n} + O(n^2/BDP)$ almost fair flows see loss, utilization is $\Omega(1-1/\sqrt{n})$ & Theorem~\ref{thm:utilization}  \\            
\hline
\end{tabular}
\caption{Minimum buffer sizes required for full link utilization, our new results highlighted in gray.}
\end{table*}

\section{Experiment Methodology}
\label{sec:methodology}

Throughout the paper, we will use measurements from our physical testbed to illustrate and validate our results. We have built a platform on programmable hardware which allows us to easily run experiments where TCP flows share a congested link, and observe how TCP and queues behave at a packet-by-packet level.

While working on this paper, we were frequently baffled by experimental results that didn't match our understanding of TCP. Almost always, when we dug into the measurements, we found that TCP's actual behavior did not match our understanding. We hope that by including our measurements, we can make TCP's behavior (and our results) more understandable.

\noindent\textbf{Setup} Our test network consists of two servers with 32 2.4Ghz cores and 32 GB of RAM each, connected by a Barefoot Tofino switch and use up to 2 MB of buffers. The servers run Linux 5.5.0, each with an Intel 82599ES 10Gb/s NIC. Each NIC is connected to a port of a 6.5Tb/s Barefoot Tofino switch via 100G to $4\times10$Gb/s breakout cables. The sender server is connected to the Tofino with two 10G cables. The interfaces are bonded and packets are equally split between them, which ensures that congestion happens at the switch (otherwise we only see congestion at the sender NIC). We set MTUs to 9000 bytes so the servers can sustain a 10Gb/s rate. We add 1ms of delay at the sender using Linux's traffic controller \texttt{tc}, and use iperf3 to generate TCP traffic. We used congestion control algorithms available in Linux 5.5.0, including TCP Reno, Cubic, BBR (v1), and Scalable TCP. We also used Google's alpha release of BBR2 \cite{CCY+2019b}.

\noindent\textbf{Measuring TCP} Since 2017, Linux has had a tracing system which reports the congestion window, smoothed RTT, and other important information for each received acknowledgement~\cite{Hir2017}. We used this system to observe TCP's behavior in our experiments, and have created a small open source tool to make them easier to work with: \url{https://github.com/brucespang/tcp_probe}.

\noindent\textbf{Measuring Buffers} It is usually hard to precisely measure a switch's buffer occupancy over time. We wrote a P4 program running on the Barefoot Tofino switch, to measure (and report) the time series of the packet buffer occupancy. Each time the switch receives a packet it sends a small UDP packet (often called a ``postcard'') to a collector which includes the current buffer occupancy and some packet metadata. At the collector, we record these postcards and match them with the TCP trace samples. Our ability to measure the complete time-series of a buffer is not novel (e.g. \cite{179783,BGG+2008,Kim:tnpU6lBL,CK2019,BKT+2019}). We believe this sort of fine-grained measurement should be a standard part of any TCP experiment.

\noindent\textbf{Datasets and Source Code} We have released our P4 code, infrastructure for collecting TCP traces, and the traces we collected online at \url{https://github.com/brucespang/ifip21-buffer-sizing }.

\section{Sizing buffers for a single flow}
\label{sec:single-flow}

Since the 1990s, it has been known that a network carrying a single TCP Reno flow requires a buffer size of one BDP in order to keep the bottleneck link fully utilized. However, TCP has changed dramatically since then, with Proportional Rate Reduction \cite{DMCG2011} and new congestion control algorithms like Cubic \cite{HRX2008a} and BBR \cite{CCG+2017}. It is not clear whether the buffer requirements for a single TCP flow have also changed.

In this section, we revisit the classic rule of thumb that a buffer should be at least one BDP for a single TCP connection. Some of the results in this section have been shown by prior authors, and we revisit their results to verify that they still hold experimentally, and as a warm-up for later sections. We find that despite the changes to the Linux kernel, the BDP rule still holds for TCP Reno. We find that a BDP is an \emph{overestimate} of the buffering required for Cubic and BBR.

More specifically, we will
\begin{enumerate}
\item Show that the BDP rule still holds for a modern implementation of TCP Reno.
\item Show that Cubic requires a buffer of only $\frac{3}{7} \BDP$ and BBR requires a buffer of only $\frac{1}{4} \BDP$.
\item Show how link utilization changes when buffers are smaller than required for 100\% utilization.
\item Show that BBR2 does not fully utilize a link, but instead aims for high link utilization.
\item Show how the usual proof of the BDP rule depends on outdated TCP behavior, and give a simpler and more general proof.
\end{enumerate}

\subsection{The original BDP rule}
The original BDP rule, attributed to Jacobson~\cite{Jac1990} and Villamizar and Song~\cite{VS1994}, states that for a single TCP Reno flow to fully utilize a link with a drop-tail buffer, the buffer must be at least the capacity of the link $C$ times the RTT. This is the bandwidth delay product, $\BDP = C \cdot RTT$.
\begin{fact}[BDP Rule]
  \label{thm:bdp-rule}
  TCP Reno fully utilizes a link if and only if $B \geq \BDP$.
\end{fact}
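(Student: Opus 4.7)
The plan is to characterize TCP Reno's steady-state dynamics on a single-flow bottleneck and reduce the full-utilization question to an arithmetic condition on the post-loss window. First I would argue that, once Reno reaches steady state, the congestion window traces a sawtooth: it grows by roughly one MSS per \RTT{} under AIMD, and drops by a factor of two whenever a packet is lost. Because losses only occur when the queue is full, the peak window size $W_{\max}$ reached just before a loss is exactly the amount of data that simultaneously fills the pipe and the buffer, $W_{\max} = \BDP + B$. Immediately after loss, the congestion window becomes $W_{\max}/2 = (\BDP + B)/2$.

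Next I would connect cwnd to utilization via the in-flight count. The bottleneck link is fully utilized at time $t$ if and only if at least \BDP{} packets are in flight at time $t$. Over one sawtooth cycle, the in-flight count is minimized immediately after the halving, at which point it equals $(\BDP + B)/2$. Hence the link stays fully utilized throughout the entire cycle if and only if
\[
  \tfrac{1}{2}(\BDP + B) \;\ge\; \BDP,
\]
which rearranges to $B \ge \BDP$. For the converse, if $B < \BDP$ then the in-flight count drops strictly below \BDP{} for a non-trivial portion of each cycle, so average utilization is strictly less than $1$.

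The main obstacle, and what separates this from the textbook proof, is justifying the claim that the in-flight count right after a loss equals the new (halved) cwnd. The classical argument cheats by assuming the sender goes silent after a loss and waits for the pipe to drain before the halved window takes effect. That assumption is false for modern Linux: PRR meters out retransmissions during recovery so that in-flight data is actively steered towards the new cwnd rather than draining to zero. I would handle this by observing that the only property of recovery we actually need is that, at the end of recovery, in-flight equals the new cwnd from below; PRR satisfies this by design, as does the classical silence-then-refill behavior. So the minimum-in-flight value is $(\BDP + B)/2$ in both cases, and the same inequality $B \ge \BDP$ is both necessary and sufficient. The rest is a straightforward verification that no other point in the cycle has smaller in-flight than this post-halving minimum, which follows because AIMD only ever increases the window between losses.
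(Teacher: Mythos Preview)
Your proposal is correct and follows essentially the same route as the paper. The paper packages your ``fully utilized iff at least $\BDP$ packets are in flight'' step as a standalone lemma, $\mu(t)=\min(W(t)/\BDP,1)$, derived from an explicit queue--inflight assumption $Q(t)=W(t)-\BDP$, and then applies it exactly as you do: loss at $W=\BDP+B$, halve to $(\BDP+B)/2$, and require this to be at least $\BDP$. The one stylistic difference is that the paper sidesteps your PRR discussion by defining $W(t)$ as the in-flight count (not \texttt{cwnd}) and leaning on ACK clocking to justify the queue--inflight relation directly, so it never has to argue about when in-flight catches up to the halved window; it simply picks the time $s$ at which $W(s)=W(t)/2$ and reads off the utilization.
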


Most proofs of the BDP rule \cite{Jac1990,Appenzeller:2004fk} rely on TCP halving its window size $W$ on a loss, and then waiting to send another packet until it receives $W/2$ acknowledgements so that the number of packets in-flight drops below the new window size. But this is not how modern TCP implementations work. For example, TCP Cubic decreases its window by less than a half on a loss. Even TCP Reno no longer stops sending packets on a loss. Instead it behaves according to Proportional Rate Reduction (PRR)~\cite{DMCG2011}, introduced by Dukkipati \etal in 2011, which supplanted Rate-Halving, proposed by Hoe \cite{Hoe1995} in 1995 and as an RFC by Mathis \etal \cite{SMM} in 1999. Both algorithms gradually lower the size of the window after receiving a loss. When PRR decides to decrease the congestion window from $w$ to $w'$, it sends a new packet for every $w/w'$ packets acknowledged. For TCP Reno, this means a new packet is sent for every two packets acknowledged. Most deployments of TCP Reno now use PRR.

\subsection{Does the BDP rule still hold experimentally?}

Our measurement infrastructure allows us to observe the behavior of PRR after a loss, which is shown in Figure~\ref{fig:prr-loss} for TCP Reno (with PRR) and buffer size $B=\BDP$. The figure is zoomed in to show what happens when a packet is dropped by the switch at about the 1ms marker. The TCP source detects the loss about 1.5ms later, and begins to gradually decrease its window over the next 2ms. After 2ms (about 2 RTTs), the window size has been halved and the queue is nearly empty.

Without PRR, TCP Reno stops sending packets as soon as the drop is detected and remains silent until the window is reduced by half. Even though TCP with PRR never stops sending, the queue almost completely drains, which suggests that a BDP worth of buffering is still necessary. In what follows, we prove this.

\begin{figure}
  \centering
  \includegraphics[width=0.49\linewidth]{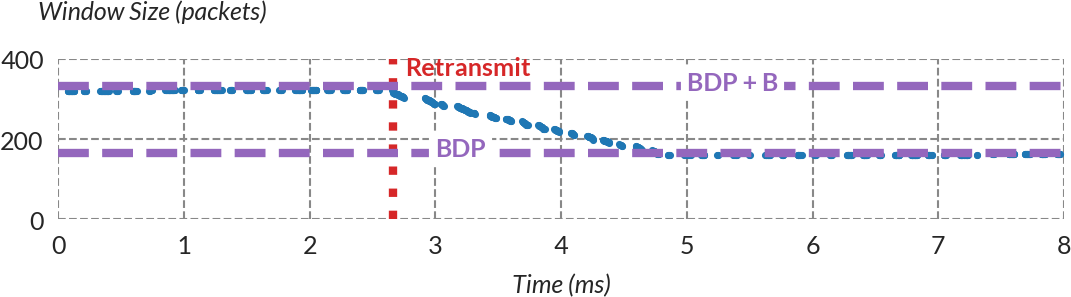}
  \hfill
  \includegraphics[width=0.49\linewidth]{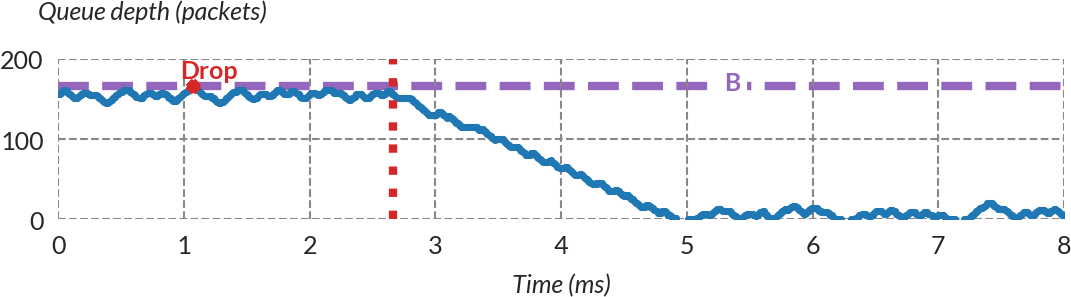}
  \caption{Per-packet window size (from kernel) and queue depth (from switch), for one TCP Reno flow with a BDP sized buffer. PRR keeps sending packets after a loss, but the queue still drains.}
  \label{fig:prr-loss}
\end{figure}

\subsection{Technical Preliminaries}
\label{sec:prelim}

In order to prove that $B\ge \BDP$ is still necessary with PRR, we need an argument that does not rely on TCP stopping sending after a loss. To this end, we set up some definitions that will help us prove our results throughout the paper. These are standard assumptions in the TCP and the buffer sizing literature, and apply to the common case when $n$ TCP flows share a drop-tail queue.

Throughout the paper, we will refer to a \emph{TCP flow} in theorem statements, and use it to mean the following set of assumptions and definitions, plus commonly known TCP behavior.

\begin{definition}
A packet is {\em in-flight} if it has been sent, but the acknowledgement has not yet been received by the sender. Let $W(t)$ be the aggregate window, or the number of in-flight packets at time $t$.

For the case of $n$ flows sharing a link, let $w_i(t)$ be the window of flow $i$, i.e. the number of in-flight packets for flow $i$ at time $t$.
Therefore the aggregate window is $W(t) = \sum_{i=1}^n w_i(t)$.
\end{definition}

Throughout the paper, we will heavily use the following behavior of TCP Reno: when TCP Reno successfully sends a packet and receives an acknowledgement, it increases its window from $w_i$ to $w_i+1$. When it detects a congestion signal (e.g. a dropped packet or an ECN mark), it decreases the window from $w_i$ to $w_i/2$.

Note that our definition of the window of flow $i$ is the number of in-flight packets, which is not necessarily the same as the congestion window ({\em cwnd}). For example, when TCP Reno reduces its congestion window in response to a loss, the number of in-flight packets may briefly exceed the congestion window.

\begin{definition}
The RTT at time $t$ is the time taken from when a packet is sent at time $t$ until we receive its acknowledgement. 
Our results assume that all flows have the same RTT.
\end{definition}

\begin{definition}
The \emph{sending rate} of flow $i$ at time $t$ is $x_i(t) = w_i(t)/RTT(t)$. The aggregate sending rate is $x(t) = W(t)/RTT(t)$.
\end{definition}

Links have queues to store packets before they are sent. We will also call them buffers. Packets are put into these queues as they arrive, and are later sent from the link. Queues have a length at any point in time, which we call the queue length, and a maximum number of packets they can fit before they need to drop packets, which we call the maximum queue length or buffer size.

\begin{definition}
  Let $Q(t)$ be the length of the queue at time $t$.
\end{definition}

A link with capacity $C$ sends one packet from the queue every $C^{-1}$ seconds, provided the queue is not empty. While some links, such as wireless links, have variable capacity, we only consider fixed data-rate links that send a packet every $C^{-1}$ seconds.

We will care about the utilization of these links.
\begin{definition}
For a link with departure rate $D(t)$ and capacity $C$, the link utilization is $\mu(t) = D(t)/C$
\end{definition}

A work-conserving queue will be fully utilized at time $t$ if and only if $Q(t) > 0$. While it may seem counter-intuitive, dropping packets does not necessarily reduce link utilization. If packets are dropped from the tail of the queue (i.e. an arriving packet doesn't fit, so it is dropped), then packets are lost \emph{before} the bottleneck link. If a work-conserving link always has a non-empty queue, it is fully utilized no matter how many packets are dropped.

\begin{definition}
The bandwidth-delay product of a flow crossing a link of capacity $C$ with round-trip time $RTT$ is $\BDP = C \cdot RTT.$
\end{definition}

We will use the following standard assumption about how the size of the queue relates to the number of in-flight packets~\cite{Appenzeller:2004fk,DHD2005,LF2015,AGM+2010,WG2006}.

\begin{assumption}
  \label{assm:queue}
  Let $B$ be the maximum buffer size, $t$ be some time, and $W(t)$ be the number of in-flight packets at time $t$. We assume that at time $t$, the queue size is
 $$Q(t) = \begin{cases}
    0 & \mbox{if}\quad W(t) \leq \BDP \\
    W(t) - \BDP & \mbox{if}\quad 0 < W(t) < B + \BDP\\
    B & \mbox{if}\quad W(t) \geq B + \BDP \\
  \end{cases}
  $$
\end{assumption}

Our results use this assumption heavily, and so it merits some discussion. It is a strong assumption, designed to let us compare our results to existing work. Consider the queue at some time $t$ when there are $W(t)$ packets in-flight. Over the previous RTT, if the bottleneck link is fully-utilized it will (by definition of BDP) send exactly one $\BDP$ worth of packets spaced  $C^{-1}$ seconds apart. Our assumption is essentially that the congestion control algorithm will closely match this behavior; i.e. it will also send packets spaced $C^{-1}$ seconds apart. In the case of TCP, this is the ACK clocking mechanism; packets sent from the bottleneck queue lead to a stream of ACKs at the sender, spaced $C^{-1}$ seconds apart, which pace out the sent packets.

Assumption~\ref{assm:queue} holds for the results in this paper because of ACK clocking, and is validated by our experimental results. If we were to model new congestion control algorithms, we would need to first check that this assumption holds. If, to pick an extreme example, a new congestion control algorithm required the sender to remain silent for a long time and then suddenly send a burst of packets, then the assumption would not hold. Fortunately, this is not how currently deployed congestion control algorithms work. 

We could relax this assumption if needed, provided we kept some relationship between queue length $Q(t)$ and $\BDP$. As an example, we observed some amount of variability in the queue depth in our measurements (e.g. Figure~\ref{fig:prr-loss}), which we believe is due to packet bursts from the sender. We could model this variability by assuming that $(1-\eps)(W(t) - \BDP) \leq Q(t) \leq (1+\eps)(W(t) - \BDP)$ for some constant $\eps > 0$. We could then derive similar results. For instance, the $B \geq \BDP$ rule for TCP Reno would become $B \geq (1+\eps)\BDP$. We have not carried this parameter through our results since it makes our theorems harder to understand and compare with existing work, and $\eps=0$ captures the general behavior we observe in our measurements.

With this assumption, we can prove the following well-known lemma which is key to showing all the buffer sizing results in this paper.
\begin{lemma}
  \label{lm:main}
  For TCP flows sharing a link at time $t$,
  the utilization $\mu(t) = \min(W(t)/\BDP, 1)$.
\end{lemma}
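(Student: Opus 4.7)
The plan is to prove the lemma by case analysis on whether $W(t)$ exceeds $\BDP$ or not, using Assumption~\ref{assm:queue} to translate statements about the aggregate window into statements about the queue occupancy, and then invoking the work-conserving nature of the link.

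First I would handle the easy case $W(t) \geq \BDP$. Here Assumption~\ref{assm:queue} gives $Q(t) = W(t) - \BDP \geq 0$, and as long as $W(t) > \BDP$ we have $Q(t) > 0$. Since the link is work-conserving, a non-empty queue means a packet is being transmitted every $C^{-1}$ seconds, so $D(t) = C$ and $\mu(t) = 1 = \min(W(t)/\BDP,\,1)$. The boundary case $W(t) = \BDP$ is a limit of this situation: exactly one BDP of packets is perpetually in transit on the link, so by ACK clocking acknowledgements return at rate $C$, the sender injects packets at rate $C$, and $\mu(t) = 1$.

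Next I would handle the underutilized case $W(t) < \BDP$. By Assumption~\ref{assm:queue} we have $Q(t) = 0$, so $\RTT(t)$ equals the base propagation $\RTT$ (there is no queueing delay). The aggregate sending rate is $x(t) = W(t)/\RTT(t) = W(t)/\RTT$ by Definition, and because no packets accumulate in the queue the departure rate $D(t)$ equals the arrival rate $x(t)$. Dividing by capacity, $\mu(t) = W(t)/(C \cdot \RTT) = W(t)/\BDP$, which matches $\min(W(t)/\BDP,\,1)$ since the ratio is at most one.

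The main (modest) obstacle is conceptual rather than computational: one must be careful that Assumption~\ref{assm:queue} already bakes in the ACK-clocking behavior that lets us equate arrivals and departures in the no-queue regime, and that ``work-conserving link with a positive queue is fully utilized'' is the right way to translate $Q(t) > 0$ into $D(t) = C$. Once those two bridges are in place, the lemma is essentially a one-line consequence of Assumption~\ref{assm:queue} together with the definitions of sending rate, $\BDP$, and utilization.
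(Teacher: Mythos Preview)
Your proof is correct and follows essentially the same two-case argument as the paper: use Assumption~\ref{assm:queue} plus work-conservation to get $\mu(t)=1$ when $W(t)>\BDP$, and use $Q(t)=0$ together with $x(t)=W(t)/\RTT$ to get $\mu(t)=W(t)/\BDP$ otherwise. The only cosmetic difference is that the paper places the boundary case $W(t)=\BDP$ in the second branch (yielding $\mu(t)=\BDP/\BDP=1$ directly) rather than arguing it via ACK clocking, and the paper writes $Q(t)=\min(W(t)-\BDP,B)$ to cover the possibility $W(t)\geq \BDP+B$; neither affects the substance.
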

\begin{proof}  
\noindent If $W(t) > \BDP$, then applying Assumption~\ref{assm:queue} we get $Q(t) = \min(W(t) - \BDP, B) > 0$. We assume queues are work conserving, so $\mu(t) = 1$. If $W(t) \leq \BDP$ then during RTT $t$ we send $W(t)$ packets at a rate of $x(t) = W(t)/\RTT$. By Assumption~\ref{assm:queue}, $Q(t) = 0$, so the departure rate $D(t) = x(t)$. Therefore the utilization is $\mu(t) = D(t)/C = W(t)/\BDP$.
\end{proof}

Lemma~\ref{lm:main} is simultaneously obvious and a bit surprising. In words, if the number of in-flight packets {\em ever} falls below a BDP, then we will lose utilization. One might imagine that if the queue is full and the arrival rate is slightly less than $C$ (which happens if $W(t)$ is slightly less than a $\BDP$), the queue would only drain a little bit. But by Assumption~\ref{assm:queue} (and perhaps because of the way TCP's in-flight mechanism works), this means the sender will only send at a rate of at most $W(t)/RTT$ packets, which is slightly less than $C$.

Finally, we will need the following assumption about when losses happen.
\begin{assumption}[Loss]
  \label{assm:loss}
  Let $B$ be the maximum buffer size. We assume that TCP loses packets at time $t$ if and only if $Q(t) \geq B$, which by Assumption~\ref{assm:queue} is equivalent to $W(t) \geq \BDP + B$.
\end{assumption}

Lemma~\ref{lm:main} is easiest to use when a congestion control algorithm decreases its window in response to a full queue (e.g. a lost or marked packet), since Assumption~\ref{assm:loss} gives us a direct relationship between the buffer size and the number of in-flight packets upon a loss (or mark). In our experiments, Reno, Cubic, BBR, and Scalable TCP all decrease their windows in response to a loss or mark. Modeling other congestion control algorithms might require additional assumptions to determine a lower bound on the number of in-flight packets.

\subsection{Settings Where Our Assumptions Do Not Hold}
Our results only apply when our assumptions hold. These assumptions are used by existing work on buffer sizing, so we are confident that our results are comparable to existing work.

Our results rely on the assumption that if the bottleneck queue is always non-empty, it will send exactly $\BDP$ packets every RTT. This is a simple consequence of a work-conserving queue and a constant egress line-rate. If instead we wanted to model a packet scheduling algorithm at the bottleneck link (e.g. strict precedence service, where our flows are low precedence), then we would need to determine how many packets are sent when the queue is non-empty. We could substitute that value into Assumption~\ref{assm:queue} and most of our results would apply.

There are other interesting settings for buffer sizing, especially links using AQM (e.g. \cite{FJ1993a,PNP+2013,LF2015,NJ2012a}). We believe that our techniques would help prove buffer sizing results for AQM policies, given good models of how different AQM policies interact with TCP. We believe this would be a very interesting direction for future work.

\subsection{Rules of thumb for Reno, Cubic, BBR, and Scalable TCP}
\label{sec:new-bdp-rules}

Lemma~\ref{lm:main} makes it easy to prove rules of thumb for buffer sizing. As a demonstration, here we use it to prove rules for TCP Reno and other common congestion control algorithms.

Our strategy will be to find the minimum window size and apply Lemma~\ref{lm:main} to measure link utilization. Loss based algorithms only decrease their windows on a loss, so we will use Assumption~\ref{assm:loss} (or a similar assumption for BBR) to relate the link utilization to the buffer size.

\noindent\textbf{Reno:}  Fix some time $t$ when Reno experiences a loss and let $s$ be the next RTT when $W(s) = W(t)/2$. By Assumption~\ref{assm:loss}, $W(s) \geq \frac{1}{2}(\BDP + B)$. So the utilization of the link is $\mu(s) \geq \min\left(\frac{1}{2}\inparen{1 + \frac{B}{\BDP}}, 1\right)$.

If we would like the link to be fully utilized (i.e. $\mu(s) = 1$), this result requires that $B \geq \BDP$. For smaller buffers, every 1\% reduction in utilization results in a 2\% reduction in required buffer size. So for 90\% utilization, we only need $B \geq 0.8 \BDP$. Provided our assumptions were to hold without a buffer, Reno would still have a link utilization of at least $C/2$.

This result was shown for full utilization by~\cite{Jac1990}, and for full utilization using these assumptions by~\cite{DHD2005}. We are unaware of prior results for less than full utilization.

\noindent\textbf{Decrease by $\beta$ on a loss:} Suppose a new congestion control algorithm sets its window to $W' = \beta W$ on a loss, and never decreased its congestion window otherwise. Let $t$ be the time just before a loss, and $s$ the time when $W(s) = \beta W(t).$ In this case, we would have $W(t) \geq \BDP + B$, and by Lemma~\ref{lm:main}, $\mu(s) \geq \min\left(\beta\inparen{1 + B/\BDP}, 1\right)$.

For full link utilization $\mu(s) = 1$, we would need $B > (\beta^{-1}-1) \BDP.$ This matches observations made by~\cite{HR2008,LF2015,MAK2019a}.

\noindent\textbf{Cubic:} On a loss, Cubic decreases its window by constant factor ``beta\_cubic,'' in Linux this constant is $717/1024 \sim 7/10$. Using the above result, Cubic requires $B \geq (10\gamma/7 - 1) \BDP$. For full link utilization $\gamma = 1$, Cubic requires a buffer of size $\frac{3}{7} \BDP$. This result was shown by \cite{LF2015}. For 90\% link utilization, Cubic requires a smaller buffer of size $0.28\BDP$. Without a buffer, Cubic gets a utilization of 70\%.

\noindent\textbf{Scalable:} On a loss, Scalable TCP decreases its window to $7/8$. It therefore requires a buffer size of $B \geq (8/7 \gamma - 1) \BDP$ For full link utilization, this requires $B \geq \frac{1}{7} \BDP$. For 90\% utilization, it requires a very small buffer of just $B \geq 0.03 \BDP$.  Without a buffer, Scalable gets a utilization of 87.5\%. We do not know of any prior publications of this result, but it is an easy corollary.

\noindent\textbf{BBR:} BBR (v1) alternates between phases where it probes for bandwidth and probes for the minimum RTT. We will focus on the required buffer size for full utilization during the bandwidth probing phase. BBR may behave differently in other settings, but this model characterizes the behavior we see in our experiments. 

While probing for bandwidth, BBR cycles through a series of pacing rates which limit the number of packets in flight. It picks a pacing rate of $R$ and sends at $\frac{5}{4}R$ for an RTT, $\frac{3}{4}R$ for an RTT, and $R$ for six RTTs. In our experiments, BBR encounters loss during the $\frac{5}{4} R$ phase. After seeing loss, it decreases its rate to $\frac{3}{4} R$ in response \cite[Section 4.3.4]{Car2017}.

Suppose BBR picks a pacing rate of $R \sim C$. When a loss occurs, the number of packets in-flight is at least $\BDP + B$ by Assumption~\ref{assm:loss}. During the next RTT, the bottleneck queue drains at a rate of $C-\frac{3}{4}C=\frac{1}{4}C$. After one RTT, the number of packets in-flight decreases by $\frac{1}{4} C \cdot RTT$ so $W(s) = \frac{3}{4}BDP + B$. By Lemma~\ref{lm:main}, $\mu(s) = \min\left(\frac{3}{4} + B/\BDP, 1\right)$. For full link utilization, we need $B \geq \frac{1}{4}\BDP.$ For 90\% link utilization, we need a buffer of size $B \geq 0.15 \BDP$.

We do not know of prior publications of this result. Recently there has been work showing that BBR has higher loss and is unfair to other algorithms in smaller buffers \cite{SJS+2018,CCG+2017a,CJS+2019,HHG+2018,WMSS2019a}. Our results instead focus only on the buffer size needed to keep the link fully utilized.

\noindent\textbf{BBRv2:} Google is developing a second version of BBR, and recently released an alpha implementation. We want to apply our analysis to BBRv2 as soon as the algorithm is described in sufficient detail for us to model it correctly (we assume a technical report or paper will describe the alpha code).  We do, however, include its behavior in our experimental setup using the alpha code. What we know so far is that instead of keeping the queue full and responding to losses (like all other algorithms we tested), BBRv2 keeps the queue close to empty and responds quickly to increases in delay. One consequence of this is that BBRv2 keeps the link highly utilized, but not quite 100\%. 

Hence, buffer sizing for BBRv2 is very different than for a loss-based algorithm, and BBRv2 is able to get good performance across a range of buffer sizes, while keeping the link utilization a little shy of 100\%.  This motivates us, later in this paper, to study buffer sizes that achieve high link utilization, but not quite 100\%.

\subsection{Experimental validation}

\begin{figure}
  \centering
  \includegraphics[width=0.75\linewidth]{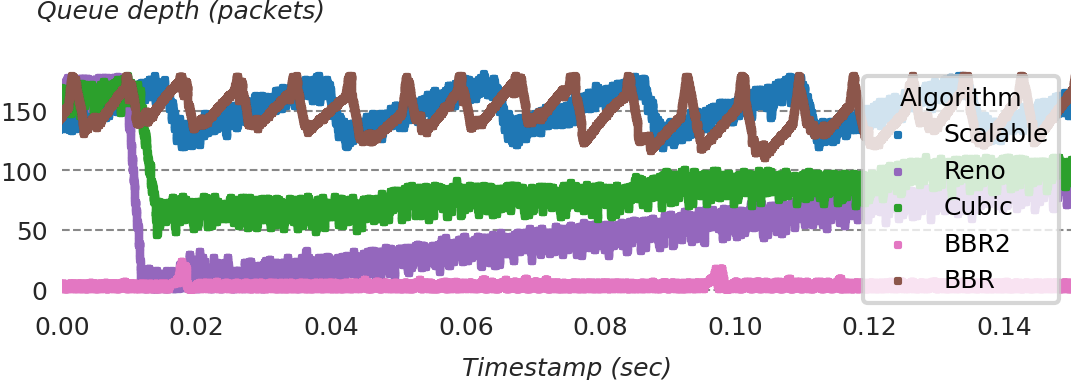}
  \caption{Queue occupancy versus time for one Reno, Cubic, BBR, and Scalable TCP flow with $B=\BDP$. The buffer is (just) large enough for Reno to keep the queue non-empty and the link fully utilized. The buffer is oversized for Cubic, BBR, and Scalable TCP which keeps the queue persistently occupied.}
  \label{fig:cubic-loss}
\end{figure}

\begin{figure*}
    \hfill
    \begin{subfigure}[h]{0.24\textwidth}
        \centering
        \includegraphics[width=\linewidth]{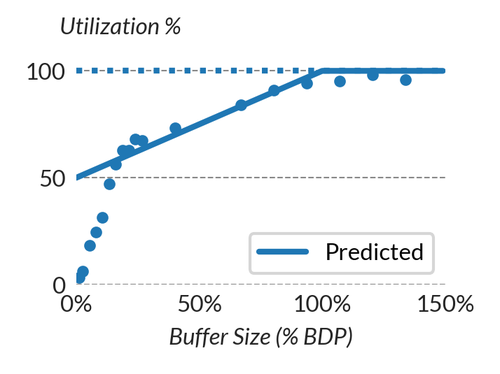}
        \caption{Reno}
    \end{subfigure}
    \hfill
    \begin{subfigure}[h]{0.24\textwidth}
        \centering
        \includegraphics[width=\linewidth]{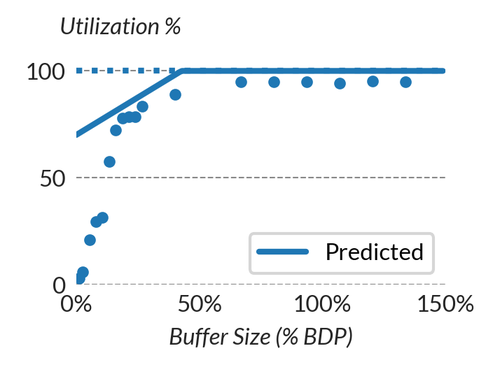}
        \caption{Cubic}
    \end{subfigure}
    \hfill
    \begin{subfigure}[h]{0.24\textwidth}
        \centering
        \includegraphics[width=\linewidth]{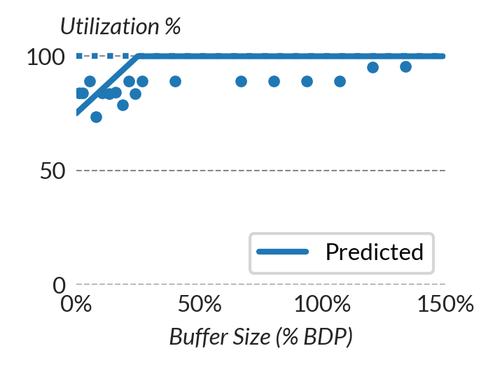}
        \caption{BBR}
    \end{subfigure}
    \hfill
    \begin{subfigure}[h]{0.24\textwidth}
        \centering
        \includegraphics[width=\linewidth]{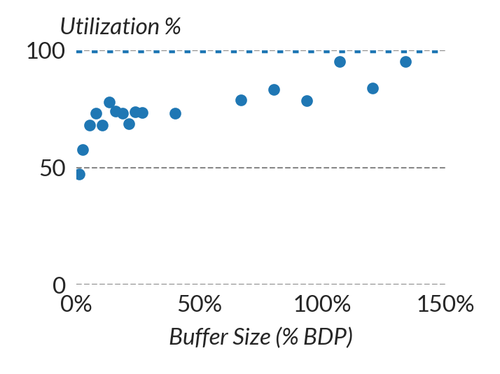}
        \caption{BBRv2}
    \end{subfigure}
    \caption{Link utilization for one TCP flow across a range of congestion control algorithms and buffer sizes.}
    \label{fig:single-flow-utilization}
\end{figure*}

Measurements from our physical network confirm the buffer sizing rules for TCP Reno, Cubic, BBR, and Scalable TCP in Section~\ref{sec:new-bdp-rules}. For example, Figure~\ref{fig:cubic-loss} shows the queue occupancy, at the moment the window is decreased. As expected, if we set the buffer size to a BDP of 165 packets, TCP Reno allows the queue to drain by a full BDP and (almost) go empty. TCP Cubic allows the queue to drain by about 60-70\%. BBR allows the queue to drain by about 25\%. Scalable TCP drains the queue to within the range 11-25\% of the BDP. The ranges we observe are consistent with the theory in Section~\ref{sec:new-bdp-rules}.

The queue depth variability in our measurements is because the sending kernel transmits bursts. Except for BBR, which uses its own pacing algorithm to reduce burstiness. 

We measure link utilization in our experiments by recording {\tt iperf's} aggregate throughput every 10ms, and report the 1st percentile value. Figure~\ref{fig:single-flow-utilization} shows the link utilization as a function of buffer size. Utilization is well-predicted by our models above a queue depth of about twenty packets. Below twenty packets utilization falls off  quickly, which appears to be caused by burstiness (bursts cause the queue size to vary by about twenty packets); i.e., burstiness can cause loss with a smaller window than Assumption~\ref{assm:loss}. Note that BBR, which has very little burstiness, has no utilization ``cliff'' below 20 packets.

\section{Sizing buffers for multiple flows}
\label{sec:multiple-flows}

In~\cite{Appenzeller:2004fk}, Appenzeller \etal argue that when $n$ TCP Reno flows share a connection, and $n$ is large, a much smaller buffer of $BDP/\sqrt{n}$ is sufficient to keep the bottleneck link highly utilized.  In particular, they prove the following theorem.
\begin{theorem}[Square root of $n$ rule]
  If for all times $t$, the windows of $n$ TCP Reno flows are independent uniform random variables in the range $\inparen{1 \pm \frac{1}{3}} \frac{\BDP + B}{n}$, and
  if $B \geq \BDP / \sqrt{n},$
  then in the limit as $n \to \infty$,
  $\P\inparen{\mu(t) < 1} \leq 0.02.$
  \label{thm:sqrt-n-rule}
\end{theorem}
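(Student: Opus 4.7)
The plan is to reduce the utilization statement to a tail bound on the aggregate window $W(t) = \sum_{i=1}^n w_i(t)$, then apply the central limit theorem to the i.i.d.\ uniform $w_i(t)$.

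First I would invoke Lemma~\ref{lm:main}: since $\mu(t) = \min(W(t)/\BDP, 1)$, the event $\{\mu(t) < 1\}$ is exactly $\{W(t) < \BDP\}$. So it suffices to show $\P(W(t) < \BDP) \to \text{(something} \leq 0.02)$ as $n \to \infty$. The strategy is then entirely a calculation about a sum of $n$ i.i.d.\ bounded random variables, with the $\BDP/\sqrt{n}$ factor coming from the standard deviation of this sum.

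Next I would compute the first two moments of $W(t)$. Each $w_i(t)$ is uniform on the interval of length $\tfrac{2}{3}(\BDP+B)/n$ centered at $(\BDP+B)/n$, so $\E[w_i(t)] = (\BDP+B)/n$ and $\Var(w_i(t)) = \bigl(\tfrac{2}{3}(\BDP+B)/n\bigr)^2 / 12 = (\BDP+B)^2/(27 n^2)$. By independence, $\E[W(t)] = \BDP + B$ and $\sigma := \sqrt{\Var(W(t))} = (\BDP+B)/\sqrt{27 n}$. The deviation we need to bound is then
\[
\frac{\BDP - \E[W(t)]}{\sigma} \;=\; -\frac{B}{\sigma} \;=\; -\frac{B\sqrt{27 n}}{\BDP + B}.
\]
Using the hypothesis $B \geq \BDP/\sqrt{n}$, this ratio is at most $-\sqrt{27 n}/(\sqrt{n}+1)$, which tends to $-\sqrt{27}$ as $n \to \infty$.

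Finally I would apply the classical CLT (the $w_i(t)$ are i.i.d.\ and uniformly bounded, so the Lindeberg condition is trivial) to conclude that $(W(t) - \E[W(t)])/\sigma$ converges in distribution to $Z \sim N(0,1)$. Therefore
\[
\limsup_{n \to \infty} \P\bigl(\mu(t) < 1\bigr) \;=\; \limsup_{n \to \infty} \P\bigl(W(t) < \BDP\bigr) \;\leq\; \P\bigl(Z \leq -\sqrt{27}\bigr),
\]
and $\Phi(-\sqrt{27}) \approx 10^{-7}$, which is comfortably below the stated $0.02$ threshold.

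The main obstacle here is really only a bookkeeping one: correctly plugging in the variance of a uniform distribution (the factor of $12$) and then tracking that $B/\sigma$ stays bounded away from $0$ as $n$ grows. There is no difficulty in the probabilistic step itself, since the $w_i(t)$ are bounded i.i.d.\ random variables so the CLT applies with no extra conditions. The large gap between $\Phi(-\sqrt{27})$ and $0.02$ suggests that the constants in the hypotheses (the $\tfrac{1}{3}$ width and the $\sqrt{n}$ factor) are chosen for simplicity rather than tightness, which is consistent with the phrasing of the result as an order-of-magnitude rule of thumb.
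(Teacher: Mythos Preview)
The paper does not give its own proof of this theorem: it is stated as the main result of Appenzeller, Keslassy, and McKeown~\cite{Appenzeller:2004fk} and cited as prior work, with the rest of Section~\ref{sec:multiple-flows} devoted to replacing its strong i.i.d.\ uniform assumption by the weaker {\fairness} and synchronization hypotheses of Theorem~\ref{thm:our-sqrt-n}. So there is nothing in the present paper to compare your argument against directly.

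That said, your reconstruction is correct and is exactly the argument in~\cite{Appenzeller:2004fk}: reduce $\{\mu(t)<1\}$ to $\{W(t)<\BDP\}$ via the queue/window relation (what this paper packages as Lemma~\ref{lm:main}), compute $\E[W(t)]=\BDP+B$ and $\sigma=(\BDP+B)/\sqrt{27n}$ from the uniform law, and apply the CLT to the standardized sum. Your observation that $\Phi(-\sqrt{27})\approx 10^{-7}\ll 0.02$ is also right; the looseness comes from the original paper choosing round constants (the $\tfrac{1}{3}$ width and the clean $\BDP/\sqrt{n}$ threshold) rather than optimizing. Nothing is missing from your outline.
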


Appenzeller \etal show experimental evidence that the square root of $n$ rule holds for TCP Reno, and the result has held up in later  experiments~\cite{BGG+2008}. However, Theorem~\ref{thm:sqrt-n-rule} makes the strong assumptions that: (1) flows are TCP Reno,  and (2) windows are independent uniform random variables in the range $[\frac{2}{3} \frac{BDP+B}{n}, \frac{4}{3} \frac{BDP+B}{n}]$. The rule does not apply to modern congestion control algorithms, and because of the strong assumptions it is not obvious whether a new or modified algorithm will satisfy it.

  \begin{figure*}
    \centering
    \begin{subfigure}[h]{0.32\textwidth}
        \centering
        \includegraphics[width=\linewidth]{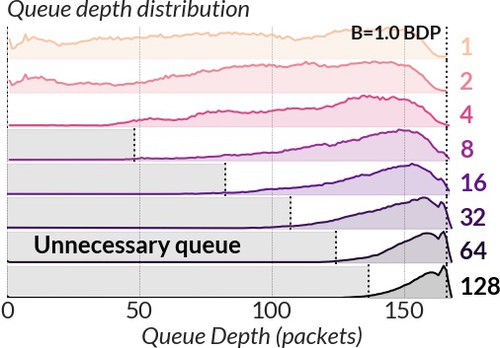}
        \caption{Buffer size: $1 \BDP$.}
    \end{subfigure}
    \hfill
    \begin{subfigure}[h]{0.32\textwidth}
        \centering
        \includegraphics[width=\linewidth]{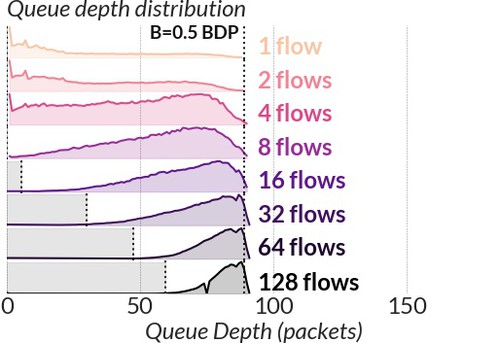}
        \caption{Buffer size: $0.5\BDP$.}
    \end{subfigure}
    \hfill
    \begin{subfigure}[h]{0.32\textwidth}
        \centering
        \includegraphics[width=\linewidth]{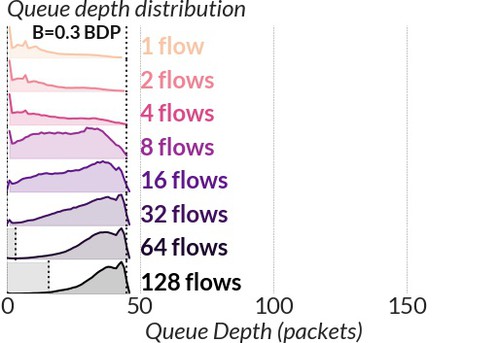}
        \caption{Buffer size: $0.25\BDP$.}
    \end{subfigure}
    \caption{Queue depth distributions for TCP Reno with decreasing buffer sizes. As the number of flows increases, the queue stays close to its maximum value (the right-most dotted line). The square root of $n$ rule predicts that queues will not fall below a certain threshold, which is shaded in grey. The non-shaded region between them shows the required buffer size for our experiments: they are a good estimate, and hold as buffer sizes decrease.
    }
    \label{fig:reno-queue-depth-distribution}
\end{figure*}

In this section, we show that the square root of $n$ result holds more broadly than for TCP Reno. 
We start, in Section~\ref{sec:relationship}, by showing that with much weaker assumptions, the aggregate window stays high when $n$ TCP Reno flows share a link. This is depicted visually in Figure~\ref{fig:reno-queue-depth-distribution}, showing how the queue depth distribution ``concentrates'' with increasing $n$. The square root of $n$ result for TCP Reno (and other multiplicative-decrease algorithms) is an immediate consequence of this, which we show in Section~\ref{sec:buffer-size}. Basically, if the aggregate window stays above $\BDP + B - \BDP/\sqrt{n}$, and $B > BDP/\sqrt{n}$, then the queue never goes empty. More broadly, we will use this effect to analyze other congestion control algorithms.

We also prove results in Section~\ref{sec:utilization} for the required buffer size when link utilization is {\em below} 100\%. In particular, we show that utilization is $\Omega(1-1/\sqrt{n})$ even with a constant size buffer. As a perhaps surprising example, if more than 1,000 flows share a link, utilization is at least 97\% even with a constant size buffer. For the same link utilization, the square root of $n$ result of \cite{BGG+2008} requires a buffer of size $\BDP/\sqrt{n}$.

Finally, in Section~\ref{sec:bbr} we extend our results to other congestion control algorithms. We show that because BBR incorporates some randomness by design, we are able to prove a square root of $n$ result with \emph{only} an assumption about fairness. Using these ideas, we propose a modification for multiplicative decrease algorithms that provably guarantees that a small buffer is sufficient, provided fairness holds.

\subsection{How Reno keeps links and queues full}
\label{sec:relationship}

TCP Reno tends to keep links and buffers full, especially as more flows use a network. Figure~\ref{fig:reno-queue-depth-distribution} shows the distribution of queue depths in our experiments. As the number of flows increases, TCP Reno keeps the queue depth distribution high. In this section, we will prove this.

In order to prove some sort of square root of $n$ rule, we will need some concept of how the aggregate window is split across multiple flows. If windows are extremely imbalanced, we cannot hope for a square root of $n$ rule. For instance, if the aggregate window is made up of only one flow, we are in the setting of the single flow case in Section~\ref{sec:single-flow} and cannot expect a square root of $n$ rule. Furthermore, in our experiments in Section~\ref{sec:eval-sqrt-n}, we find that the required buffer size gradually increases as flows become less fair.

To deal with this dependence on fairness in our theoretical results, we will define a concept of \emph{almost} fairness for TCP windows. Almost fairness has a parameter $\fairSymbol$, which relates to how far the window sizes are from equal. We will prove our results using this parameter, which gives us a convenient relationship between unfairness and buffer size.

\begin{definition}[\fair]
Consider $n$ TCP flows sharing a link. Let $w_i(t)$ be the number of in-flight packets for flow $i$ at time $t$. Let $\fairSymbol \geq 1$. Flow $i$ is {\fair} if for all time $t$,
  \begin{equation}
      \fairLower \frac{\BDP}{n} \leq w_i(t) \leq \fairUpper \frac{\BDP}{n}. \label{eq:fair-def}
  \end{equation}
\end{definition}

\fairness{} is a property of a congestion control algorithm. The closer an algorithm keeps its windows to equal, the smaller value of $\fairSymbol$ it will have. This may or may not be \emph{desirable} behavior for a congestion control algorithm, but we will be able to prove smaller buffer requirements for smaller values of $\fairSymbol$.  We can easily measure and calibrate this parameter in our experiments, and we report the results of this in Section~\ref{sec:eval-params}.
We are unaware of prior uses of this fairness metric. We could use other ways of measuring fairness, for instance Jain's fairness index \cite{JCH1998}, but \fairness{} is convenient for our proofs and easy to measure in our experiments.

It may feel somewhat unnatural that we have not defined \fairness{} as $w_i(t) \leq \fairUpper(\BDP + B)/n$, since the aggregate window is at most $\BDP + B$ and is divided among $n$ flows. However, we are thinking about the regime where $B = O(\BDP),$ and can account for the dependence on $B$ by increasing the value of $\fairUpper$ in our definition. Doing so makes our results significantly easier to prove and interpret, and the cost of a slightly looser bound on buffer size.

With this definition, we can prove our main result which relates the aggregate window after a loss to buffer size, fairness, and the number of flows. We will spend the rest of the section discussing its interpretation.

\begin{theorem}
  \label{thm:our-sqrt-n}
  Consider $n$ {\fair} TCP Reno flows sharing a link with window size $w_i(t) \geq 2$. Fix two times $0 \leq t_1 \leq t_2$. If at most $\frac{n^2}{\fairUpper \BDP} + \sqrt{n}$ flows decrease their windows between $t_1$ and $t_2$, then 
  for all $t_1 \leq t \leq t_2$
  \begin{equation*}
      W(t) \geq \BDP  + B - \frac{\fairUpper\BDP}{\sqrt{n}}. %
  \end{equation*}
\end{theorem}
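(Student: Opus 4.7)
The plan is to analyze the piecewise dynamics of the aggregate window $W(t) = \sum_i w_i(t)$ for TCP Reno. By Assumption~\ref{assm:loss}, any halving at time $s$ requires $W(s^-) \geq \BDP + B$, and by \fairness{} each halving of a single flow drops $W$ by at most $\fairUpper \BDP / (2n)$. Between halving events each of the $n$ flows adds one packet per RTT to $W$ via additive increase, so $W$ is nondecreasing on any loss-free subinterval.

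Let $t^* \in [t_1, t_2]$ be a time achieving $\min_t W(t)$. Then either $t^* = t_1$ (handled by the initial conditions, since the aggregate window is bounded below by fairness), or $t^*$ lies immediately after some halving event at time $s \in [t_1, t^*]$, with no further halvings on $(s, t^*]$---otherwise $W$ would have had to recover to $\BDP + B$ in that interval, contradicting the minimality of $W(t^*)$. In the latter case $W(t^*) \geq W(s^+) \geq \BDP + B - k_s \cdot \fairUpper \BDP / (2n)$, where $k_s$ counts the flows halving simultaneously at $s$. The claim $W(t) \geq \BDP + B - \fairUpper \BDP / \sqrt{n}$ then reduces to showing $k_s \leq 2\sqrt{n}$.

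The main obstacle is bounding $k_s$ using only the total halving count $K \leq n^2 / (\fairUpper \BDP) + \sqrt{n}$ over the whole interval. The trivial bound $k_s \leq K$ already suffices when $K \leq 2\sqrt{n}$ (i.e.\ when $\fairUpper \BDP \geq n^{3/2}$). Otherwise, I appeal to a recovery argument: the ``extra'' $K - k_s$ halvings beyond the final burst at $s$ must have occurred at strictly earlier times, and for each of them the dynamics required $W$ to regrow to $\BDP + B$ before the next halving could fire. The continuous AI growth budget (at most $n$ packets per RTT) balances against the total halving drop, and the term $n^2 / (\fairUpper \BDP)$ precisely captures the number of halvings whose drops are absorbed by AI recovery before time $s$ and therefore do not contribute to the deficit at $t^*$.

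Formalizing this recovery-vs-drop accounting---juggling continuous AI growth with the discrete halving jumps under the invariant that every halving requires $W \geq \BDP + B$, and then tightly matching the leftover of $K$ (after subtracting the absorbed halvings) against the final burst size---is the technical heart of the argument. Once this accounting is in hand, combining it with the inequality $W(t^*) \geq \BDP + B - k_s \cdot \fairUpper \BDP / (2n)$ yields the claimed lower bound uniformly over $t \in [t_1, t_2]$.
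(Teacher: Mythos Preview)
Your approach has a genuine gap, and it stems from omitting the additive-increase contribution of the \emph{non-halving} flows. In your key inequality
\[
W(t^*) \;\geq\; W(s^+) \;\geq\; \BDP + B - k_s \cdot \frac{\fairUpper \BDP}{2n},
\]
you account only for the drop from the $k_s$ flows that halve at $s$, and then try to show $k_s \leq 2\sqrt{n}$. But nothing in the hypothesis prevents \emph{all} $K = n^2/(\fairUpper\BDP) + \sqrt{n}$ halvings from occurring in the same RTT, so that $k_s = K$. In that case your bound becomes $\BDP + B - n/2 - \fairUpper\BDP/(2\sqrt{n})$, which is off by $n/2$ from the target. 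Your ``recovery argument'' cannot rescue this case, since there are no earlier halvings to absorb. What you are missing is that in that same RTT the remaining $n - k_s$ flows each add $+1$ by additive increase, and \emph{this} is what cancels the $n^2/(\fairUpper\BDP)$ portion of $K$---not any temporal recovery between loss events.

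The paper's proof exploits exactly this observation and is therefore far simpler than the dynamics-tracking plan you sketch. It fixes $t\in[t_1,t_2]$, lets $D$ be the set of all flows that decreased by time $t$, and writes a single inequality
\[
W(t) \;\geq\; W(t_1) + (n - |D|) - \tfrac{1}{2}|D|\,w^*,\qquad w^* := \fairUpper\BDP/n,
\]
then uses $w_i\geq 2$ (hence $1 + w^*/2 \leq w^*$) to simplify to $W(t) \geq W(t_1) + n - |D|\,w^*$. The hypothesis $|D| \leq n/w^* + \sqrt{n}$ is engineered precisely so that $|D|\,w^* \leq n + \sqrt{n}\,w^*$, whence the $+n$ from AI cancels the first term and leaves $W(t) \geq W(t_1) - \fairUpper\BDP/\sqrt{n}$. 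Invoking Assumption~\ref{assm:loss} at $t_1$ finishes. There is no need to locate a minimizer $t^*$, isolate a final burst, or do any recovery accounting; the $n^2/(\fairUpper\BDP)$ term is not ``the number of halvings absorbed by recovery before $s$'' but simply the number whose aggregate drop is offset by one round of additive increase across the non-decreasing flows.
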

\begin{proof}
See Appendix~\ref{app:proof-sqrt-n}.
\end{proof}

We have stated Theorem~\ref{thm:our-sqrt-n} for TCP Reno to make the theorem statements more direct, but as in Section~\ref{sec:single-flow}, it is easy to adapt to other congestion control algorithms. Suppose a congestion control algorithm decreases by $\beta w_i$ on a loss and otherwise increases by at least $\alpha$. We would need to add a condition to Theorem~\ref{thm:our-sqrt-n} that the smallest window $w_i \geq \alpha/(1-\beta)$. This affects Equation~\eqref{eq:aimd-bound} in the proof, but the overall result does not change.

\subsection{Buffer size required for full link utilization}
\label{sec:buffer-size}
With Theorem~\ref{thm:our-sqrt-n}, we can easily find the minimum buffer size required for full link utilization.

\begin{theorem}
\label{thm:our-sqrt-n-for-buffers}
  Consider $n$ {\fair} TCP Reno flows sharing a link with window size $w_i(t) \geq 2$. Fix two times $0 \leq t_1 \leq t_2$. If at most $\frac{n^2}{\fairUpper \BDP} + \sqrt{n}$ flows decrease their windows between $t_1$ and $t_2$, and
\begin{equation*}
    B \geq \frac{\fairUpper \BDP}{\sqrt{n}}
\end{equation*}
then $\mu(t) \geq 1$ for all  $t_1 \leq t \leq t_2$.
\end{theorem}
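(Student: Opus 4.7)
The plan is to show this is a direct corollary of Theorem~\ref{thm:our-sqrt-n} combined with Lemma~\ref{lm:main}. The hypotheses of Theorem~\ref{thm:our-sqrt-n-for-buffers} are exactly the hypotheses of Theorem~\ref{thm:our-sqrt-n}, so I can immediately invoke it to conclude that for all $t \in [t_1, t_2]$,
\begin{equation*}
W(t) \;\geq\; \BDP + B - \frac{\fairUpper \BDP}{\sqrt{n}}.
\end{equation*}

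Next I would substitute the assumption $B \geq \fairUpper \BDP/\sqrt{n}$, which gives $B - \fairUpper \BDP/\sqrt{n} \geq 0$ and hence $W(t) \geq \BDP$ for every $t \in [t_1, t_2]$.

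Finally, Lemma~\ref{lm:main} states that the utilization at time $t$ satisfies $\mu(t) = \min(W(t)/\BDP, 1)$. Since $W(t) \geq \BDP$, we get $\mu(t) = 1$ for all $t \in [t_1, t_2]$, completing the proof.

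There is no real obstacle here, as all the heavy lifting is done by Theorem~\ref{thm:our-sqrt-n}. The only thing worth double-checking is that Lemma~\ref{lm:main} applies uniformly in $t$ (which it does, since it was stated for an arbitrary time) and that the choice of buffer size $B$ appears consistently between Assumption~\ref{assm:queue}, Theorem~\ref{thm:our-sqrt-n}, and the statement here; this is just bookkeeping. The substantive content of the square root of $n$ rule for TCP Reno is entirely encapsulated in Theorem~\ref{thm:our-sqrt-n}, and this theorem is simply its translation from a window-size lower bound into a buffer-size recommendation via the utilization lemma.
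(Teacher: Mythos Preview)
Your proposal is correct and matches the paper's approach exactly: the paper presents Theorem~\ref{thm:our-sqrt-n-for-buffers} as an immediate consequence of Theorem~\ref{thm:our-sqrt-n}, and the translation from the window lower bound to full utilization via Lemma~\ref{lm:main} under the condition $B \geq \fairUpper\BDP/\sqrt{n}$ is precisely the intended argument.
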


The key condition in Theorem~\ref{thm:our-sqrt-n} is that at most $\frac{n^2}{\fairUpper \BDP} + \sqrt{n}$ decrease their in-flight packets. This expression is unusual, and deserves some explanation.

The first term $\frac{n^2}{\fairUpper \BDP}$ is related to the number of fair flows which must decrease windows for the aggregate window to decrease. Suppose only one TCP Reno flow with a fair window size of $\BDP/n$ sees a loss and halves its window. If $\BDP/n < n-1$, then the remaining flows will each increase their windows by one and the aggregate window will \emph{increase} despite the loss. Appendix~\ref{sec:decrease-window-condition} formalizes this intuition.

The second term $\sqrt{n}$ is the number of \emph{additional} flows which decrease their in-flight packets over and above this minimum, and measures the amount of synchronization. In Section~\ref{sec:eval-sqrt-n}, we show that this is the number of additional flows which decrease their windows in our experiments. Our results also gracefully degrade as the amount of synchronization increases. We will discuss these and more modifications to Theorem~\ref{thm:our-sqrt-n} in Section~\ref{sec:desync}

Theorem~\ref{thm:our-sqrt-n-for-buffers} helps us determine the correct value of $n$ to use when sizing a buffer. Prior work has noted that Theorem~\ref{thm:sqrt-n-rule} assumes all flows are active and contributing to the buffer size, whereas in practice they might not~\cite{MAK2019b, SM2019, VSLS2007}, making it difficult to size the buffer correctly. Consider, for example, an application that starts and stops every second; should its flows be counted, or only if it recently had packets in the buffer? Theorem~\ref{thm:our-sqrt-n-for-buffers} uses the number of flows which see packet loss during an RTT, allowing us to size the buffer accordingly.

\subsection{Link utilization when $n$ TCP Reno flows share a link} 
\label{sec:utilization}
Prior buffer sizing work has looked for the smallest buffer required for full link utilization. Using Theorem~\ref{thm:our-sqrt-n}, we can also understand what happens if the buffer becomes even smaller.

\begin{theorem}\label{thm:utilization}
Consider $n$ {\fair} TCP Reno flows sharing a link with window size $w_i(t) \geq 2$. Fix two times $0 \leq t_1 \leq t_2$. If at most $\frac{n^2}{\fairUpper \BDP} + \sqrt{n}$ flows decrease their windows between $t_1$ and $t_2$, then for all $t_1 \leq t \leq t_2$ and $B \geq 0$
\begin{equation*}
    \mu(t) \geq 1 - \frac{\fairUpper}{\sqrt{n}}.
\end{equation*}
\end{theorem}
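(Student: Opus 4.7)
The plan is to reduce Theorem~\ref{thm:utilization} to a one-line consequence of Theorem~\ref{thm:our-sqrt-n} combined with Lemma~\ref{lm:main}. All the technical work is already packaged in Theorem~\ref{thm:our-sqrt-n}, which under exactly the hypotheses we are given (same fairness, same bound on the number of window-decreasing flows, same window-size floor $w_i \geq 2$) yields the uniform aggregate-window lower bound
\begin{equation*}
W(t) \;\geq\; \BDP + B - \frac{\fairUpper \BDP}{\sqrt{n}}
\quad\text{for all } t_1 \leq t \leq t_2.
\end{equation*}

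Next I would feed this into Lemma~\ref{lm:main}, which states $\mu(t) = \min(W(t)/\BDP,\, 1)$. Dividing the above bound by $\BDP$ gives
\begin{equation*}
\frac{W(t)}{\BDP} \;\geq\; 1 + \frac{B}{\BDP} - \frac{\fairUpper}{\sqrt{n}} \;\geq\; 1 - \frac{\fairUpper}{\sqrt{n}},
\end{equation*}
where the second inequality uses only $B \geq 0$. Taking the minimum with $1$ preserves the bound, since $1 \geq 1 - \fairUpper/\sqrt{n}$ and $W(t)/\BDP \geq 1 - \fairUpper/\sqrt{n}$. Hence $\mu(t) \geq 1 - \fairUpper/\sqrt{n}$, as required.

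There is essentially no obstacle here; the difficulty was absorbed entirely into Theorem~\ref{thm:our-sqrt-n}. The only thing worth commenting on is that the bound is vacuous when $\fairUpper \geq \sqrt{n}$ (the right-hand side becomes non-positive), which is consistent with the fact that for extremely unfair flows, small-buffer arguments should not give a meaningful utilization guarantee. It is also worth noting that, unlike Theorem~\ref{thm:our-sqrt-n-for-buffers}, no lower bound on $B$ is assumed: the theorem explicitly allows $B \geq 0$ and the bound degrades gracefully as $B$ shrinks, with the critical case $B = 0$ still yielding utilization at least $1 - \fairUpper/\sqrt{n}$. This is the $\Omega(1 - 1/\sqrt{n})$ ``constant-buffer'' statement highlighted in the overview.
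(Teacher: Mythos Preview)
Your proposal is correct and matches the paper's (implicit) approach: the paper does not spell out a separate proof for Theorem~\ref{thm:utilization}, treating it as an immediate corollary of Theorem~\ref{thm:our-sqrt-n} together with Lemma~\ref{lm:main}, which is exactly what you do. Your observation that the bound is vacuous when $\fairUpper \geq \sqrt{n}$ and that $B \geq 0$ is the only buffer assumption needed are both apt and consistent with the paper's discussion.
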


This suggests that even with very small buffers, as $n$ grows TCP Reno approaches full link utilization quite quickly, at a rate of $O(1/\sqrt{n})$. For instance if $n = 10,000$ and $\fairUpper = 2$, then the link will always be at least 98\% utilized---independent of buffer size.

This is a worst-case bound on instantaneous utilization, not on average utilization. This bounds the minimum utilization immediately after a loss. As TCP increases its window after a loss, utilization will increase and the longer term utilization will be higher.

Intuitively, the square root of $n$ result says that TCP Reno will not decrease the aggregate window too far on a loss. If the buffer is fairly large, this means the buffer will tend to remain very full. If the buffer is very small, this means that link utilization will remain high.

\subsection{BBR guarantees a $\sqrt{n}$ rule}
\label{sec:bbr}

Our square root of $n$ results so far require assumptions about how TCP flows behave. For example, Theorem~\ref{thm:our-sqrt-n}, assumes only a limited number of flows decrease their windows in each RTT.  

We can also prove a square root of $n$ rule for BBR (v1), based on how it probes for available bandwidth. If BBR detects a loss, while probing for bandwidth, it {\em randomly} decides whether or not a flow should decrease its window, which in turn desynchronizes the flows sufficiently for the  square root of $n$ rule to hold. Our proof requires no assumption on the number of flows seeing a loss at the same time. Essentially by incorporating some randomness, BBR is able to {\em guarantee} a square root of $n$ result with minimal assumptions. 

\begin{theorem}
  \label{thm:sqrt-n-bbr}
  Consider $n$ {\fair} BBR flows in the ``probe bandwidth'' phase sharing a link. Fix two times $0 \leq t_1 \leq t_2$. If
  \begin{equation}
  B > \frac{\fairUpper \BDP}{\sqrt{2 n}} \sqrt{\ln 1/\delta}, \label{eq:bbr-b-condition}
  \end{equation}
  then $\P(\mu(s) < 1) \leq \delta$ for all  $t_1 \leq t \leq t_2$.
\end{theorem}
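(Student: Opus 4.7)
The plan is to follow the template of Theorem~\ref{thm:our-sqrt-n}: start at a moment when $W$ is saturated at $\BDP + B$, bound how far $W$ can drift below this value before time $t$, and invoke Lemma~\ref{lm:main}. The new ingredient is that BBR randomizes, per flow, whether to back off on any given loss. Hence the net drift of $W$ becomes a sum of \emph{independent} per-flow contributions, and in place of the Reno-style deterministic hypothesis that only $\sqrt{n} + n^2/(\fairUpper\BDP)$ flows decrease, I would apply Hoeffding's inequality directly. This is the reason the theorem can drop the bound on the number of synchronized flows entirely.

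Concretely, fix $t \in [t_1, t_2]$, let $t^*$ be the most recent loss time $\leq t$, and decompose $W(t) = W(t^*) - \sum_{i=1}^n Y_i$, where $Y_i$ is the net decrease of flow $i$'s in-flight count on $[t^*, t]$. Assumption~\ref{assm:loss} gives $W(t^*) \geq \BDP + B$. The \fairness{} hypothesis bounds $Y_i \in [0, \fairUpper \BDP / n]$ once $t^*$ is chosen at a window-peak, and the independent probe-phase seed of each BBR flow makes the $Y_i$ mutually independent. Hoeffding's inequality with $c_i = \fairUpper \BDP / n$ then yields
$$\P\!\left(\sum_{i=1}^n Y_i - \mathbb{E}\!\left[\sum_{i=1}^n Y_i\right] > u\right) \leq \exp\!\left(-\frac{2 n u^2}{\fairUpper^{2}\BDP^{2}}\right),$$
and choosing $u = \fairUpper \BDP \sqrt{\ln(1/\delta)/(2n)}$ sets the right-hand side equal to $\delta$. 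Combined with the hypothesis $B > u$, together with $\mathbb{E}[\sum_i Y_i] \leq 0$ discussed below, this gives $\sum_i Y_i \leq B$ with probability $\geq 1 - \delta$, hence $W(t) \geq \BDP$, and Lemma~\ref{lm:main} yields $\mu(t) \geq 1$.

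The hardest step is pinning down $\mathbb{E}[\sum_i Y_i] \leq 0$, so that the Hoeffding tail sits just to the right of $B$ rather than being swamped by the mean. A single BBR flow's in-flight count oscillates through its probe cycle even absent losses, so the choice of reference time $t^*$ is delicate: to get the clean form of the theorem I would take $t^*$ to be a ``peak'' of the aggregate window, for instance the instant the loss is first registered, when every flow in the $\tfrac{5}{4}R$ phase is at its maximum in-flight count. A secondary technical point is to formalize that the $Y_i$ are independent as time-evolving quantities on all of $[t^*, t]$, not only at the single event $t^*$; this should follow because BBR picks each flow's probe-phase offset from an independent seed at start-up, so all subsequent per-flow dynamics are a deterministic function of that seed and of the shared loss signal. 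Once the reference frame is fixed correctly, the remainder of the argument is essentially a one-line concentration estimate.
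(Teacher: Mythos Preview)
Your high-level plan is right: BBR's random probe-phase offset makes the per-flow contributions independent, so a Hoeffding bound replaces the deterministic synchronization hypothesis of Theorem~\ref{thm:our-sqrt-n}, and Lemma~\ref{lm:main} finishes. But the setup contains a real inconsistency.

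You define $Y_i$ as the net decrease of flow $i$ from the reference time $t^*$ and assert $Y_i \in [0,\fairUpper\BDP/n]$; you then say the hardest step is establishing $\mathbb{E}\bigl[\sum_i Y_i\bigr] \leq 0$. These two claims cannot both hold: if $Y_i \geq 0$ almost surely then $\mathbb{E}\bigl[\sum_i Y_i\bigr] \geq 0$, with equality only when every $Y_i \equiv 0$. Anchoring $t^*$ at an aggregate-window peak does not force each individual $Y_i \geq 0$, since at any such instant some flows are still rising through their probe cycle; forcing all $Y_i \geq 0$ would require a \emph{simultaneous per-flow} peak, which is precisely the synchronization you are trying to dispense with. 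So as written the argument either loses the range bound (if you allow $Y_i$ to be signed) or loses the mean bound (if you insist $Y_i\geq 0$), and in either case the Hoeffding tail no longer lands at $B$.

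The paper avoids this trap by not anchoring at a peak at all. It reads the distribution of the per-flow change $X_i$ directly off BBR's eight-slot probe cycle: the random phase offset puts flow $i$ in the $\tfrac{5}{4}R$ slot with probability $1/8$, in the $\tfrac{3}{4}R$ slot with probability $1/8$, and at baseline $R$ otherwise. Thus $X_i$ is a \emph{symmetric} random variable taking values $\pm\tfrac{1}{4}R_i\cdot\RTT$ each with probability $1/8$ and $0$ with probability $6/8$, so $\mathbb{E}[X_i]=0$ is an immediate structural fact about the cycle rather than something to be engineered by a clever choice of $t^*$. The bound $|X_i| \leq \fairUpper\BDP/(4n)$ (note the factor $1/4$: BBR's rate changes by only $R/4$, not by the whole window) then feeds Azuma--Hoeffding directly to give the theorem. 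Your argument is salvageable once you let $Y_i$ be signed and extract its symmetric distribution from the probe cycle; at that point the ``hardest step'' evaporates.
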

\begin{proof}
See Appendix~\ref{sec:bbr-proof}.
\end{proof}

We believe Theorem~\ref{thm:sqrt-n-bbr} could be much stronger. Figure~\ref{fig:drop-bbr-distribution} shows that BBR's queue depths are much more tightly concentrated than for other algorithms, and more tightly concentrated than Theorem~\ref{thm:sqrt-n-bbr}. While the number of flows experiencing a loss may be about the same as with TCP Reno, fewer of them decrease their windows in BBR. We see no reason why a buffer size of $O(\BDP/n)$ would not be more appropriate for BBR, but have not been able to prove or disprove it. %

Finally, there are still many open questions about BBR's buffer behavior for future work. We have only considered BBRv1. We show BBRv2's experimental behavior in Figure~\ref{fig:drop-bbr2-distribution}, but have not analyzed it. We do not yet have a rule for BBR in the case when it decreases its window size when there is no packet loss. The main obstacle is that we do not observe this behavior in our experimental setup.

\subsection{Guaranteeing a $\sqrt{n}$ rule for a modified TCP Reno}
\label{sec:modified-reno}
Inspired by BBR, imagine that we added randomness to TCP Reno to reduce its buffer requirements (and increases its link utilization with small buffers). We are not arguing that TCP Reno should necessarily be changed; we are conducting a thought experiment to see how much it would reduce buffer requirements. 

When our imagined algorithm detects a loss, it halves its window size $w(t)$ with probability $p = 1/w(t)$. This could be done either by randomly ignoring lost packets, or by randomly marking packets with a suitable ECN policy. Let's assume that this does not interfere with {\fairness}, so that $\fairLower \BDP/n \leq w(t) \leq \fairUpper\BDP/n$, and
$p \leq 1/(\fairLower \BDP/n)$. The number of flows that randomly decide to decrease their windows in the same RTT is a sum of independent random variables, and we can use the following simple Chernoff bound to estimate the sum.

\begin{lemma}
\label{lm:num-flows-with-loss}
Let $D_i(t)$ be independent and identically distributed Bernoulli random variables, with $\E D_i = p \leq \frac{1}{\fairLower \BDP/n}$. Let $D(t) = \sum_{i=1}^n D_i(t)$ Then
$$\P\inparen{D(t) > \frac{n^2}{\fairLower \BDP} + \sqrt{n}} \leq \exp(-1/2).$$
\end{lemma}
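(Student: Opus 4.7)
The plan is to recognize this as a one-shot application of a Chernoff/Hoeffding concentration bound for a sum of independent Bernoullis. First, by linearity of expectation, $\E D(t) = np$, and the hypothesis $p \leq 1/(\fairLower \BDP/n) = n/(\fairLower \BDP)$ gives
$$\E D(t) \leq \frac{n^2}{\fairLower \BDP}.$$
So the event of interest sits inside a pure deviation event:
$$\left\{D(t) > \tfrac{n^2}{\fairLower \BDP} + \sqrt{n}\right\} \subseteq \left\{D(t) - \E D(t) > \sqrt{n}\right\},$$
and it suffices to bound the probability of the latter.

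Second, because $D(t) = \sum_{i=1}^n D_i(t)$ is a sum of $n$ independent $\{0,1\}$-valued random variables, I would invoke Hoeffding's inequality directly. Using the form
$$\P\bigl(D(t) - \E D(t) \geq \lambda\bigr) \leq \exp\bigl(-\lambda^2/(2n)\bigr),$$
and setting $\lambda = \sqrt{n}$, I immediately obtain the claimed bound $\exp(-1/2)$. Combined with the containment above, this proves the lemma.

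There is no substantive obstacle here; the lemma is a textbook concentration bound in disguise and the argument fits in a handful of lines. The only bookkeeping item is choosing the variance proxy: the tight Hoeffding bound for $[0,1]$-valued variables is $\exp(-2\lambda^2/n)$, which would in fact give the sharper $\exp(-2)$, while the weaker $\exp(-\lambda^2/(2n))$ form is already enough to match the stated constant $\exp(-1/2)$. Either way, once $\E D(t)$ is lined up with the stated threshold, the deduction is immediate and the hypothesis that the $D_i(t)$ are i.i.d.\ Bernoulli is used only through (i) linearity of expectation and (ii) a standard concentration inequality for bounded independent summands.
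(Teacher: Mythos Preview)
Your proposal is correct and matches the paper's proof essentially line for line: bound $\E D(t) \le n^2/(\fairLower\BDP)$, reduce to the deviation event $D(t)-\E D(t) > \sqrt{n}$, and apply the Azuma--Hoeffding bound $\exp(-\eps^2/(2n))$ with $\eps=\sqrt{n}$. Your side remark that the tighter Hoeffding form for $[0,1]$ variables would yield $\exp(-2)$ is also valid.
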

\begin{proof}
See Appendix~\ref{sec:rand-loss-proof}
\end{proof}
This bound essentially meets the conditions for Theorem~\ref{thm:our-sqrt-n}, with the small caveat that we would need to adapt the proof to using $\fairLower$ instead of $\fairUpper$. Doing so would give us a square root of $n$ result without needing to make assumptions about which flows see a loss.

\subsection{Desynchronized flows reduce buffer requirements}
\label{sec:desync}

Before Theorem~\ref{thm:our-sqrt-n}, we knew the required buffer size for two extreme cases. When all the flows are \emph{perfectly synchronized}, they lose packets simultaneously and we need a $\BDP$ of buffering.  At the other extreme, when flows are \emph{not synchronized}, Theorem~\ref{thm:sqrt-n-rule} tells us that the buffer size can be decreased by a factor of $\sqrt{n}$. 

We can extend Theorem~\ref{thm:our-sqrt-n} to the intermediate case when $\frac{n^2}{\fairUpper \BDP} + s$ flows see a loss. In this case
\begin{equation}
    W(t) \geq \BDP + B - s \frac{\fairUpper \BDP}{n}. \label{eq:arb-additional-flows}
\end{equation}

In other words, Equation~\ref{eq:arb-additional-flows} tells us:
\begin{itemize}
    \item If flows are slightly but not {\em perfectly} synchronized, then the buffer size can be made smaller than $\BDP$. For some small constant $c$, if only $s = c n$ extra flows lose packets, then the required buffer size is $c \BDP$ which is below a $\BDP$.
    \item If only a constant number of additional flows $s$ see each loss, then a buffer of size $O(\BDP/n)$ is possible. This is tantalizing, because, of course $\BDP/n < \BDP/\sqrt{n}$. For example, if 10,000 flows share a link, and $s=1$, then we can further reduce the square root of $n$ buffer requirement by another 100-fold. To be clear, our experiments do not exhibit this degree of desynchronization, but we see it as an exciting opportunity for a new congestion control algorithm.
\end{itemize}

\section{Experiments with the $\sqrt{n}$ rule}
\label{sec:eval-sqrt-n}

We evaluated Theorem~\ref{thm:our-sqrt-n} in our physical network, using our per-packet measurement infrastructure described in Section~\ref{sec:methodology}. Our experiments show that Theorem~\ref{thm:our-sqrt-n} holds in our network, and build intuition on which factors impact buffer sizes.

\subsection{The $\sqrt{n}$ rule holds when algorithms respond to full queues.}
\begin{figure}
    \centering
    \begin{subfigure}[h]{0.49\textwidth}
        \centering
        \includegraphics[width=\linewidth]{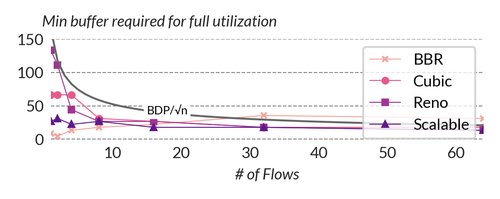}
        \caption{Buffer size required for full link utilization across all our experiments.}
        \label{fig:sqrt-n-tested}
    \end{subfigure}
    \hfill
    \begin{subfigure}[h]{0.49\textwidth}
        \centering
        \includegraphics[width=\linewidth]{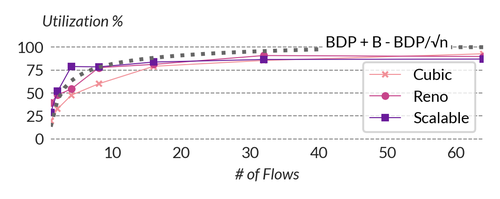}
        \caption{Utilization as a function of number of flows for a 20 packet buffer.}
        \label{fig:utilization-sqrt-n}
    \end{subfigure}
    \caption{Buffer size and link utilizations behave according to a square root of $n$ rule in our experiments}
\end{figure}

The square root of $n$ rule holds in our experiments for multiple congestion control algorithms (TCP Reno, Cubic, Scalable TCP, and BBR), despite the introduction of PRR. Figure~\ref{fig:sqrt-n-tested} shows the minimum queue depth required for full link utilization in our experiments. Each experiment lasted one thousand RTTs. We measured minimum utilization by recording {\tt iperf's} aggregate throughput over ten consecutive RTTs. We report the 1st percentile throughput.

In each experiment we found the smallest buffer size for which 99\% of packets (or more) experienced a non-empty queue. The buffer requirement follows a square-root of $n$ rule for TCP Reno. Other congestion control algorithms have similar buffer requirements for a large numbers of flows (and smaller buffer requirements for smaller numbers of flows). Theorem~\ref{thm:our-sqrt-n-for-buffers} requires that $B \geq \fairUpper \BDP/\sqrt{n}$, and our experimental results suggest it is an overestimate.

Theorem~\ref{thm:utilization} predicts that utilization will grow at a rate of $1/\sqrt{n}$ when the buffer is too small for full link utilization. Figure~\ref{fig:utilization-sqrt-n} shows the utilization as a function of the number of flows for a 20 packet buffer. Utilization increases at a rate of $1/\sqrt{n}$. It is slightly underpredicted by Theorem~\ref{thm:utilization}, and again we believe that the constants may be slightly looser than necessary.

\subsection{Algorithms keep queues full, as predicted by the $\sqrt{n}$ rule.}
\label{sec:eval-queue-distributions}

The square root of $n$ rule is primarily used to predict the minimum required buffer to keep the link utilized, as in Theorem~\ref{thm:our-sqrt-n-for-buffers}. But Theorem~\ref{thm:sqrt-n-rule} also predicts how far the queue occupancy will deviate from full if the buffer is larger than we need.  %
Figure~\ref{fig:reno-queue-depth-distribution} shows the distribution of queue occupancy for different numbers of flows (and different buffer sizes). For example, with $B=\BDP$ in Figure~\ref{fig:reno-queue-depth-distribution}(a), the distribution of queue occupancy narrows as the number of flows increase. The shaded area to the left of the distribution is unused (and therefore unnecessary) buffer; we could use a smaller buffer and shift the distribution to the left. The remaining graphs show this as we decrease the buffer size. 

We have seen that non-Reno congestion control algorithms also follow a square root of $n$ rule in our experiments, and Figure~\ref{fig:other-queue-depth-distribution} shows the distribution of queue depths for these algorithms. As with Reno, the distributions concentrate tightly around a full buffer.

\begin{figure*}
    \hfill
    \begin{subfigure}[h]{0.32\textwidth}
        \centering
        \includegraphics[width=\linewidth]{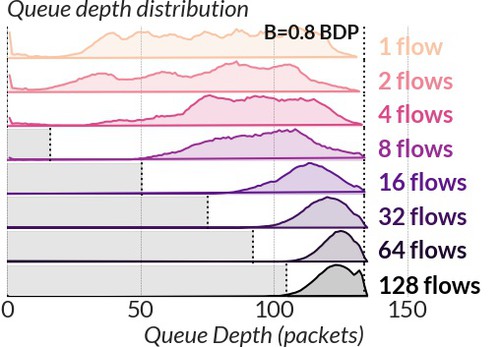}
        \caption{Cubic}
        \label{fig:drop-cubic-distribution}
    \end{subfigure}
    \hfill
    \begin{subfigure}[h]{0.32\textwidth}
        \centering
        \includegraphics[width=\linewidth]{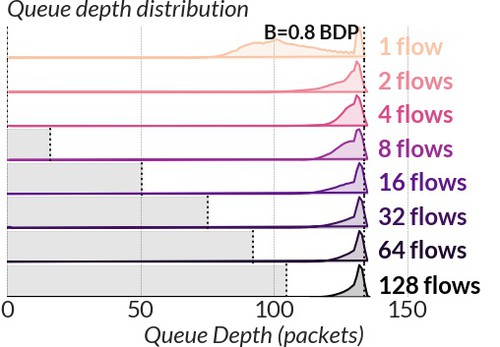}
        \caption{BBR}
        \label{fig:drop-bbr-distribution}
    \end{subfigure}
    \hfill
    \begin{subfigure}[h]{0.32\textwidth}
        \centering
        \includegraphics[width=\linewidth]{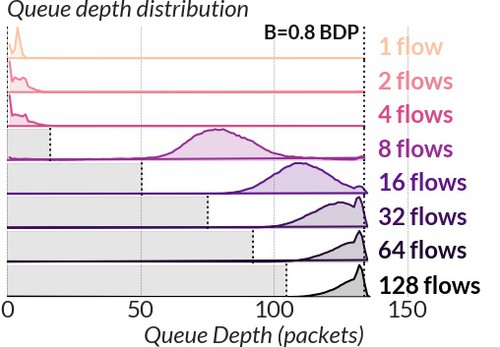}
        \caption{BBRv2}
        \label{fig:drop-bbr2-distribution}
    \end{subfigure}
    \caption{Queue depth distribution for various congestion control algorithms and buffer size $0.8 \BDP$. The non-shaded area corresponds to a buffer size of $2\BDP/\sqrt{n}$, which tends to overestimate the required buffer.}
    \label{fig:other-queue-depth-distribution}
\end{figure*}

\begin{figure}[t]
\begin{minipage}{0.49\textwidth}
  \includegraphics[width=\linewidth]{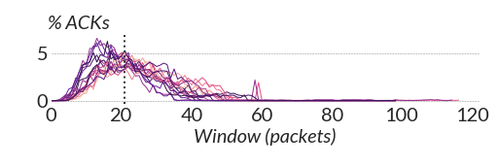}
  \caption{Distribution of window sizes measured for each received acknowledgement for 16 TCP Reno flows over one thousand RTTs with a $\BDP$ sized buffer.}
  \label{fig:fairness}
\end{minipage}%
\hfill%
  \begin{minipage}{0.49\textwidth}

    \includegraphics[width=\textwidth]{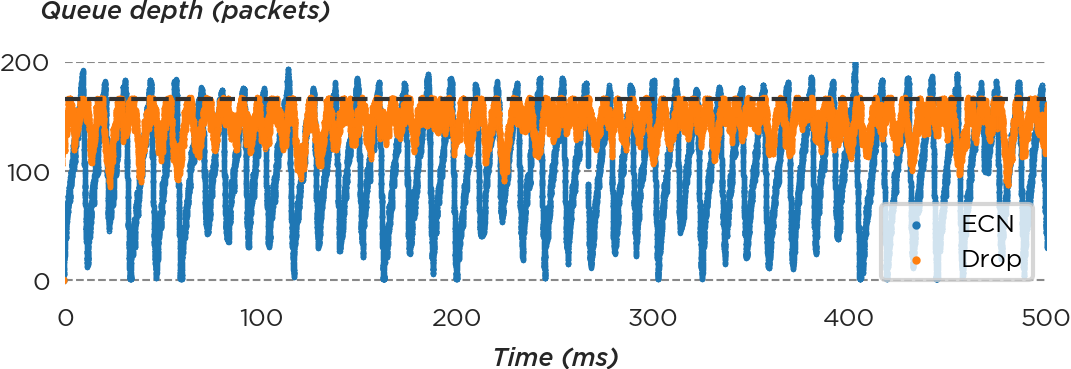}
    \caption{Queue occupancy over time for 16 TCP Reno flows sharing a queue. Packets are dropped or marked when the queue exceeds one BDP. In the ECN trace, more flows are marked when the queue is exceeded and flows become more synchronized.}
    \label{fig:ecn-reno-trace}
  
  \end{minipage}
  \end{figure}

\subsection{The $\sqrt{n}$ rule holds with non-uniform window distributions.} Figure~\ref{fig:fairness} shows the distribution of windows for 16 TCP Reno flows over one second, when $B=\BDP$. The distributions are clearly not uniform (and we observe windows in the ranges of 0-10 and 40-60 packets, outside the $\frac{3}{4}$ and $\frac{5}{4}$ range). This suggests that the uniform window size assumption in \cite{Appenzeller:2004fk} does not hold in practice.

Figure~\ref{fig:fairness} also shows the degree to which flows are treated fairly, in the sense that the the average window size for each flow is close to $(\BDP+B)/n$. It is not surprising that they are similar, since one of the goals of a congestion control algorithm is to fairly share a bottleneck link. However, we can see that the flows do not have identical window size distributions, with some average window sizes two to three times larger than the fair share of $(\BDP+B)/n$, and some half the size. This is why we have defined {\fairness}.

\subsection{The $\sqrt{n}$ rule does not hold when losses are synchronized.}
\label{sec:ecn-experiment}
Theorem~\ref{thm:our-sqrt-n} requires flows to be desynchronized, in that not too many flows can respond to loss at the same time. This is a necessary requirement and we would not expect the square root of $n$ rule to hold when flows are synchronized. In the extreme, suppose all flows experience a loss between times $t_1$ and $t_2$. In the worst case $W(t_2) = \sum_{i=1}^n \frac{1}{2} w_i(t_1) = \frac{1}{2} W(t_1)$. As pointed out in prior work~\cite{Appenzeller:2004fk,DHD2005}, we are back in the same situation as for one flow and require a BDP of buffering.

While larger numbers of TCP Reno flows tend to be less synchronized in our experiments, this is not the case for all algorithms or in all settings. As a stark illustration, we describe an experiment in which synchronization caused a larger buffer requirement. Our experiment compared dropping packets to a specific way of ECN marking: we dropped or marked packets whenever the queue depth exceeded a fixed threshold.

We ran this experiment for sixteen TCP Reno flows. Figure~\ref{fig:ecn-reno-trace} shows the queue depth over time during this experiment, comparing TCP Reno with ECN (blue line) and with dropped packets (orange line). It is clear that with ECN marking, the queue depth fluctuates much more, hence requiring a much larger buffer. With ECN the queue regularly overshot its marking threshold, so more packets were marked in the same RTT, which led to a synchronized reduction in window size.

We would like to emphasize that this experiment should \emph{not} be taken as a general statement about buffer sizing with ECN, since ECN is often used in conjunction with an AQM policy (e.g. RED~\cite{FJ1993a}). However, ECN is used in this way for BBRv2 \cite{CCY+2019b}, DCTCP \cite{AGM+2010}, and in other settings~\cite{TMR2019, RJ1990}. %

\subsection{The $\sqrt{n}$ result does not hold when flows are too unfair.} Theorem~\ref{thm:our-sqrt-n} requires windows to be almost fair, and this is a necessary requirement. Figure~\ref{fig:unfairness} illustrates how unfairness between flows can affect the required buffer size. Consider the trace of four TCP Reno flows sharing a link, particularly the large red window and the two smaller purple windows (superimposed). When the small purple windows experience losses simultaneously at 725ms, both halve their windows from 80 to 40 packets and the buffer occupancy drops by 65 packets. Yet, on its own, the larger red window causes the buffer to drop by the same amount when it halves its window size -- this is because it starts from a larger size. In general, bigger windows have a disproportionately larger effect on the buffer occupancy. In the extreme, imagine one flow used {\em all} of the link capacity, and the remaining $n-1$ flows used none. Such extreme unfairness means we are back in the single flow setting, and we can not do better than the $\BDP$ rule.

\begin{figure}
\centering
  \includegraphics[width=0.49\linewidth]{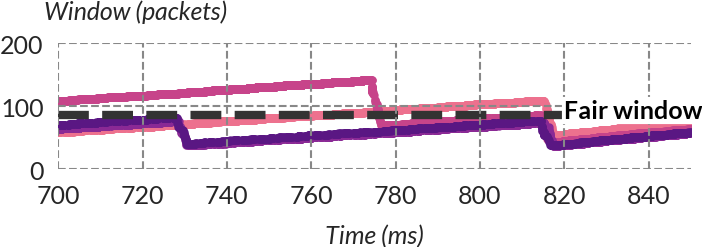}
  \includegraphics[width=0.49\linewidth]{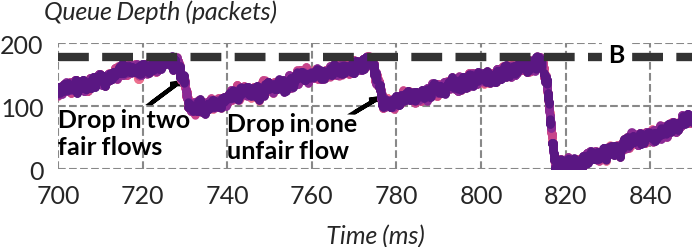}
  \caption{Windows and queue depths over time for four TCP Reno flows sharing a link. A loss for a flow with an unfairly large window results in the same queue depth decrease as for two fair flows, and half as large a decrease as a loss for all flows.}
  \label{fig:unfairness}
\end{figure}

\subsection{Checking theorem conditions in our experiments.}
\label{sec:eval-params}
Theorem~\ref{thm:our-sqrt-n} depends on the value of {\fairness}, and the number of flows which see a loss in a RTT. These are not well known values, and so we use our experimental setup to measure them and give a sense of their magnitude.

First, we will check whether the number of additional flows which see a loss is about $\sqrt{n}$. This is a strong assumption in general, but it appears to be a loose upper bound in our experiments. We ran a number of experiments with a buffer size of 1 BDP and increasing numbers of TCP Reno flows. We split each experiment into RTTs, and for each RTT where there was one loss, counted the number of flows which saw a loss in each RTT. Figure~\ref{fig:prob-loss-error} shows the average number of flows that observed a loss in that RTT. The bound $\frac{n^2}{2 \BDP} + \sqrt{n}$ of Theorem~\ref{thm:our-sqrt-n} is also plotted. It appears to be a good estimate for a small number of flows, and an \emph{overestimate} for larger numbers of flows.

Next, we check the values of {\fairness} in our experiment. To measure {\fairness}, we find the 1st and 99th percentile window sizes for all flows in an experiment. We calculate $\BDP/n$ from the experiment parameters. We then find the smallest value of $\Delta$ so that the definition of {\fair} is satisfied between the 1st and 99th percentile window sizes.

We first look at how $\fairSymbol$ depends on the number of flows. Figure~\ref{fig:delta-fairness-num-flows} shows that both $\fairLower$ and $\fairUpper$ slowly increase with the number of flows. We next look at how changing the buffer size impacts fairness. We look at all experiments with sixteen flows, to avoid confounding with the dependence on the number of flows. We find that $\fairLower$ slowly decreases with buffer size, and approaches 0.5. For Cubic, Reno, and Scalable, $\fairUpper$ is about 4. For BBR, it is about 10. $\fairUpper$ does not depend on the buffer size for any of these algorithms.

\begin{figure}
  \centering
  \includegraphics[width=0.75\textwidth]{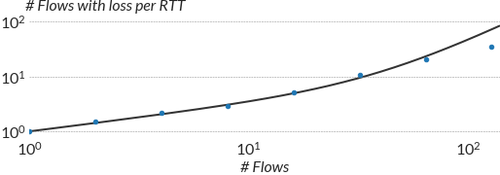}
  \caption{Fraction of TCP Reno flows which see a loss during loss events, for various numbers of simultaneous flows. Each point shows the mean number of flows seeing a loss during each RTT, and the line shows the bound of $\frac{n^2}{2\BDP} + \sqrt{n}$ from Theorem~\ref{thm:our-sqrt-n}.
  }
  \label{fig:prob-loss-error}
\end{figure}

\begin{figure}
\centering
\begin{subfigure}[t]{0.49\textwidth}
    \includegraphics[width=\linewidth]{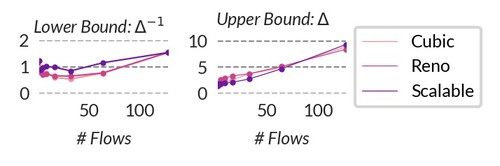}
    \caption{Varying numbers of flows}
    \label{fig:delta-fairness-num-flows}
\end{subfigure}
\hfill
\begin{subfigure}[t]{0.49\textwidth}
    \includegraphics[width=\linewidth]{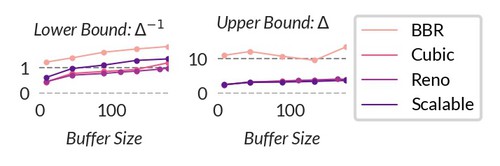}
    \caption{Varying buffer size in packets}
    \label{fig:delta-fairness-buffers}
\end{subfigure}
\caption{Worst-case {\fairness} in our experiments.}
\end{figure}
\section{Supporting evidence from real-world measurements}
\label{sec:in-the-wild}

Square root of $n$ results hold because the occupancy of router queues "concentrates" during times of congestion, leading to persistently occupied standing queues. This phenomenon has been widely observed outside of buffer sizing work. For example, Lee \etal~\cite{LLYM2011a} report that operating a link at 100\% utilization resulted in a persistent 30-50ms increase in queueing delay. In~\cite{Tah2017}, T\"aht and Reed report on an outage where they observed a 400ms increase in RTTs. In \cite{CSB+2015}, Chandrasekaran \etal report 20ms increases in RTTs over congested links in the internet core. 

Recent measurement work uses these increases in RTT to measure congestion. Luckie \etal propose TSLP~\cite{LDC+2014}, which measures RTTs periodically and uses a persistent increase in RTT as an indicator of congestion. Their figures show that congestion can elevate minimum RTTs by tens of milliseconds. In \cite{FVD2017}, Fanou \etal use TSLP to measure congestion in the African IXP substrate, and find links with elevated minimum RTTs during congestion. Sundaresan \etal classify flows with a tightly concentrated RTT during slow-start as flows experiencing congestion~\cite{SADc2017}. In \cite{DCG+2018}, Dhamdhere \etal measure the minimum RTT over a period of five and fifteen minutes to identify links experiencing persistent congestion.

In \cite{SWH+2019}, Spang \etal report on buffer experiments run at Netflix. They show that large buffers can increase RTT for \emph{all} traffic sharing a link. They run an experiment where a subset of Netflix traffic is randomly assigned to two routers, one with a 500MB buffer and one with a 50MB buffer. They observe that the minimum RTT increase by hundreds of milliseconds for all flows; and gets worse with larger buffer sizes. This suggests the queue remains relatively full for the entire hour.

Our results explain why congestion control algorithms that respond to full queues (via losses or marks) create standing queues during times of congestion. Our experiments in Section~\ref{sec:eval-queue-distributions} show the same persistent standing queues, especially as the number of flows increases. 

All of the measurements described above show evidence of concentration. While they do not prove that the concentration is of the order $\BDP / \sqrt{n}$, our results suggest that all these buffers could be shrunk without impacting link utilization. Our results also suggest that link utilization would remain high in these cases, even with a very small buffer.

It is natural to ask whether concentration relies on current congestion control algorithms, and whether a new congestion control algorithm might break it. For instance, if a few large sources of traffic switched to BBR, would congestion on the internet look very different? Our results in Section~\ref{sec:multiple-flows}, especially the ones for BBR, suggest it would not. But we cannot rule out some other, future congestion control algorithm with very small queues during congestion.

Finally, our results suggest intriguingly that there might be "dark congestion" in the internet, which current measurement techniques that look for persistent standing queues cannot detect. There might be routers that cause a large amount of synchronization, for instance by marking packets as in Section~\ref{sec:ecn-experiment}. This would lead to wildly varying queue occupancy, making it hard to detect congestion.

\section{Recommendations}
\label{sec:recommendations}

\subsection{Recommendations for Operators}
\label{sec:operators}

The networking community as a whole, both industry and academia, knows surprisingly little about buffer size requirements, and tends to oversize buffers in the WAN. We believe there might be big benefits to network operators if they run experiments to determine the required buffer size in their network. Consider, for example, the difference in required values for a network with a BDP of $100ms \times 100Gb/s = 10^{10}$ bits carrying 10,000 TCP Reno flows. If we adopt the single flow rule, we need a 10Gbit buffer; if Theorem~\ref{thm:our-sqrt-n} applies we might need only 100Mbit buffers. BBR only needs a 100Mbit buffer, and BBRv2 might require even less. If we were happy with an instantaneous utilization higher than 99\%, the buffer could be reduced a few tens of packets. This could make it possible to use simpler, cheaper routers with on-chip buffering.

We should point out that this paper focuses exclusively on the relationship between sizing a buffer and link utilization. Buffers can also have a large impact on application performance \cite{DHD2005,DD2006,HPC+2014,SWH+2019,BLG2019,ASB+2019}, and it is  important to consider these factors. More experiments would benefit operators (and the broader networking community) by improving our understanding of how buffers impact application performance.

Theorem~\ref{thm:our-sqrt-n-for-buffers} may be easy to verify in practice. If end hosts can be instrumented, one could measure the total decrease in windows over a short period to estimate the required buffer size. With fine-grained metrics from the switch, it would be possible to count the number of unique flows dropped during congestion events (e.g. by forwarding all lost packets to a collector), estimating the bandwidth of these flows (e.g. using coarse netflow statistics), and size the buffer accordingly. This could be done manually, or automated and be done periodically. If only loss statistics are available, it would be possible to estimate the number of flows which see simultaneous loss using simple probability calculations. Then Equation~\eqref{eq:arb-additional-flows} could be used to size the buffer. Note that it is \emph{not} necessary to measure the total number of flows to apply our results, which avoids the problems pointed out in \cite{MAK2019b, SM2019, VSLS2007}.

\subsection{Recommendations for Congestion Control Algorithm Designers}

To someone working on congestion control, this paper may feel a bit backwards. As argued by~\cite{MM2019a}, instead of sizing buffers based on unintended artifacts of TCP Reno (i.e. as in the single flow rule), we could instead design congestion control algorithms that work well for all queue sizes. If we want full link utilization and a small buffer, the congestion control algorithm only needs to prevent the aggregate window from dropping below a $\BDP$.

In Section~\ref{sec:single-flow}, we showed that smaller reductions in window size lead to smaller buffer requirements. In Section~\ref{sec:bbr} and Section~\ref{sec:modified-reno}, we showed how adding a small amount of randomness to a congestion control algorithm can reduce its buffer requirements. We encourage congestion control algorithm designers to use these techniques to reduce buffer requirements. Our experimental results suggest that BBRv2 is a good first step in this direction. With a bit of work, new algorithms can be more friendly to buffers, and the people operating them.

\section{Related Work}
\label{sec:related}

There have been many papers published about buffer sizing in the past twenty years. We will focus on work most closely related to our results, for broader surveys see~\cite{VST2009a,MAK2019b}.

The rule of thumb for sizing router buffers with a single TCP Reno flow is attributed to \cite{Jac1990,VS1994}. The rule was extended to multiplicative-decrease algorithms by \cite{HR2008,LF2015}. The proof of the BDP rule by \cite{DHD2005} did not depend on silence after a loss. We believe our discussion of PRR and BBR are new.

For multiple flows \cite{Appenzeller:2004fk} introduced the square root of $n$ rule, which was tested experimentally by \cite{BGG+2008} in the Level 3 backbone. \cite{DHD2005} described a rule incorporating a model of loss, and includes similar style of analysis as Theorem~\ref{thm:our-sqrt-n}. \cite{WG2006} show that number of flows which see a loss is important for whether a square root of $n$ rule applies.

TCP's tendency to observe loss simultaneously and synchronize has been observed in simulation since the 1980s~\cite{Has1989,Zha1989,ZC1990,Zhang:1991bqa,FJ1994,JRF+2001}. More recently,~\cite{LF2015} observed synchronization using physical hardware. Although we only observe a small amount of synchronization in our experiments, our buffer size results can handle larger amounts. We think it is an interesting open question why it seemed to be more common in the 80s and 90s, and why it only sometimes appears today.

In the early 2010s, \cite{GLN2011} revisited the question of buffer sizing, observing that the oversized buffers in cable modems cause massive delays and standing queues during congestion, often referred to as {\em bufferbloat}. Bufferbloat usually refers to home networks where a small numbers of flows typically share the buffer, while we are more focused on backbone routers with a large number of flows; but we view this work as complimentary. We discuss how our results are related to standing queues in Section~\ref{sec:in-the-wild}, albeit for larger numbers of flows, and our results may shed light on other parts of the Internet where ``Dark Buffers'' lurk.

Our work focuses on the relationship between buffer size and link utilization. A separate, very interesting line of work has explored other impacts of buffer sizing. \cite{DHD2005,DD2006} discuss impacts on TCP such as increased loss and throughput variability. A recent line of work \cite{SJS+2018,CCG+2017a,CJS+2019,HHG+2018,WMSS2019a} has shown that BBR can have large loss and worse fairness in small buffers. \cite{HPC+2014} describes a test-bed study showing that buffer sizing can cause a significant change in application QoE. \cite{SWH+2019} reports on production buffer sizing experiments at Netflix, and show that buffer size has a large impact on video performance. \cite{BLG2019} reports on buffer experiments at Facebook, including impacts on flow completion time. \cite{ASB+2019} shows an example where shrinking the buffer size of a home WiFi router can degrade video performance.

\section{Conclusion}
\label{sec:conclusion}

The main takeaways from this paper are new results on buffer sizing; in particular, a better understanding of the relationship between link utilization and buffer size, and how congestion control algorithms can impact this relationship. Prior work suggested that buffers can be reduced by a factor of square root $n$, offering dramatic buffer reductions in networks carrying many flows. However, the result required TCP Reno, did not make clear how we determine $n$, or what happens when windows are not independent. Our results clarify that $n$ is the number of flows that reduce their window size in the same RTT, removes the need for independence, and holds for a broader class of congestion control algorithms. Our results explore what happens when buffers are sized too small for full link utilization. Our results make it easier to run buffer sizing experiments, which should give much more confidence that the results apply broadly.

There remains work to be done. For instance, we assumed that RTTs are the same for all flows, yet clearly this is not true in practice. Variance among RTTs should only reduce synchronization, and hence further reduce the buffer requirement, but we have not been able to prove it. We still do not understand the underlying causes of synchronization, or how to reduce it in drop-tail queues. We believe our BBR result can be improved, and that BBR may allow a very small buffer. We have focused on sizing the buffer for the worst-case behavior over a relatively long period of time, and it may be possible to dynamically adjust the buffer size on a much shorter timescale.

More generally, though, we still have only a rudimentary understanding of buffer size, despite its potentially big impact on application performance. Even if we know all the traffic using a link, we don't know how to predict the best buffer size. We encourage further experimentation and measurement with real applications.

Finally, we believe that future congestion control algorithms can significantly reduce buffer requirements.
With small modifications, existing algorithms can reduce buffer requirements, or increase link utilization with small buffers.
If packet arrivals can be managed perfectly, buffers can be made smaller or even eliminated all together. %
We see no reason why a future congestion control algorithm would need anything more than a very small buffer.

\section{Acknowledgements}

We would like to thank Vladimir Gurevich for all his incredible support around debugging P4. We would also like to thank Joseph Little for the invaluable IT support. This research was funded by Netflix. The opinions expressed in this article are the authors’ own and do not reflect the view of Netflix.

\bibliographystyle{IEEEtran}
\bibliography{refs,zotero-refs}

\appendix
\section{Proof of Theorem~\ref{thm:our-sqrt-n}}
\label{app:proof-sqrt-n}
\begin{proof}
Fix some time $t$ such that $t_1 \leq t \leq t_2$. Let $D$ be the set of flows decreasing their windows by time $t$, let $w = W(t_1)$ and $w' = W(t)$. After $D$ has decreased their windows, the aggregate window $w'$ is
\begin{equation*}
    w' = w - \sum_{i \in D} \frac{1}{2} w_i(t_1) + \sum_{i \in \bar{D}} 1.
\end{equation*}
By a union bound,
\begin{equation*}    
    w' \geq w - \frac{1}{2} |D| \max_{i \in D} w_i(t_1) + n - |D|.
\end{equation*}
Let $w^* = \frac{\fairUpper\BDP}{n}$. By \fairness, $\max_{i \in D} w_i(t_1) \leq w^*$. Substituting and simplifying we have,
\begin{equation}
    w' \geq w + n - |D|\left(1+w^*/2\right), \label{eq:aimd-bound}
\end{equation}
Since $w_i(t) \geq 2$, $w^* \geq 2$, and so $1+w^*/2 \leq w^*.$ Therefore,
\begin{equation*}
       w' \geq w + n - |D|w^*.
\end{equation*}
By the conditions of the theorem $|D| \leq \frac{n^2}{\fairUpper \BDP} + \sqrt{n} = \frac{n+\sqrt{n}w^*}{w^*}.$ Substituting, we have
\begin{equation*}
       w' \geq w - \sqrt{n}w^*.
\end{equation*}
Expanding the definition of $w^*$, this becomes
\begin{equation*}
       w' \geq w - \fairUpper\frac{\BDP}{\sqrt{n}}.
\end{equation*}
By Assumption~\ref{assm:loss},
\begin{equation*}
       w' > \BDP + B -\fairUpper\frac{\BDP}{\sqrt{n}}.
\end{equation*}
\end{proof}

\section{Proof of Theorem~\ref{thm:sqrt-n-bbr}}
\label{sec:bbr-proof}

The proof of a square root of $n$ rule for BBR takes a slightly different path from the usual AIMD proof. In particular, we don't need to assume that the number of flows which see a loss is limited.

During the probe bandwidth phase, BBR picks a pacing rate of $R$ and sends at pacing rates $\{\frac{5}{4} R, \frac{3}{4} R, R, R, R, R, R, R\}$ for an RTT each. It begins this cycle at a random point (which is not $\frac{3}{4} R$).

When BBR experiences a loss, it only decreases its rate if it is sending at a pacing rate of $\frac{5}{4} R$ \cite[Section 4.3.4]{Car2017}. In expectation, if flows were equally split between all the pacing rates, we would only expect about $n/8$ flows to decrease their rates by $1/4$. And even better, another $n/8$ flows would increase their rates by $1/4$ over the next RTT and the total change would be small.

\begin{proof}
In the worst case, \emph{all} BBR flows see a loss at time $t$. Consider some flow $i$, and let $R_i(t)$ be the rate of flow $i$ at time $t$, and $X_i(t) = R_i(t) \cdot \RTT - R_i(t_1) \cdot \RTT$. Because of the random order of pacing rates, the $X_i(t)$ are independent and identically distributed according to
$$X_i(t) = \begin{cases} -\frac{1}{4}R_i(t_1)\cdot\RTT  &\mbox{with probability } \frac{1}{8} \\
  +\frac{1}{4}R_i(t_1)\cdot\RTT  &\mbox{with probability } \frac{1}{8} \\
  0  &\mbox{with probability } \frac{6}{8}\end{cases}$$
  
Our goal will be to bound $\sum_{i=1}^n X_i(t)$ and apply the usual argument with Lemma~\ref{lm:main}.

Note that $\E X_i(t) = 0$. Let $c_i = \inabs{X_i(t) - \E X_i(t)} = \inabs{X_i(t)} = R_i(t_1)\cdot\RTT/4$. By {\fairness}, $c_i$ is at most $\frac{\fairUpper \BDP}{4n}.$ Applying Azuma-Hoeffding,
\begin{align*}
    \Pr\inparen{\sum_{i=1}^n X_i(t) < t} &\leq \exp\inparen{-\frac{t^2}{2 \sum_{i=1}^n c_i^2}},\\
    &\leq \exp\inparen{-\frac{t^2 2n}{\fairUpper^2 \BDP^2}}.
\end{align*}
Fix some $\delta > 0$. If $t = \frac{\fairUpper \BDP \sqrt{\ln 1/\delta}}{\sqrt{2n}}$, then
$$\Pr\inparen{\sum_{i=1}^n X_i(t) < t} \leq \delta.$$

By definition of $X(t)$, $W_i(t) = \BDP + B - X(t)$. We have just shown that with probability at least $1-\delta$, $W_i(t) \geq \BDP + B - t$. Therefore by Lemma~\ref{lm:main}, if $B > t$ then $Q(t) > 0$ with probability at least $1-\delta$.
\end{proof}

\section{Minimum number of flows which must decrease windows}
\label{sec:decrease-window-condition}

Many of our results have a condition on the number of flows which decrease their windows that includes an expression like $\frac{n^2}{\fairUpper \BDP}.$ This is related to the number of flows which must decrease their windows in order for $W(t+1)$ to be lower than $W(t)$.

To see this, let $D(t)$ be the set of Reno flows which decrease their windows during RTT $t$. Then the window in the next RTT is
$$W(t+1) = W(t) - \frac{1}{2}\sum_{i \in D(t)} w_i(t) + n - |D(t)|.$$
Let $w^-$ be the smallest window in $D(t)$. By a union bound, we have
$$W(t+1) \leq W(t) - \frac{1}{2}|D(t)|w^- + n - |D(t)|.$$
In order for $W(t+1) \leq W(t)$, we need
$$|D(t)| \geq \frac{n}{1+w^-/2}.$$
If we let $w^+ = \frac{\fairUpper\BDP}{n}$, the condition we have on $|D(t)|$ in the theorem statements is
$$|D(t)| \geq \frac{n}{w^+}.$$
There are two differences between the two. Since we want a lower bound on $W(t+1)$ to bound the minimum buffer size, the condition has $w^+$ instead of $w^-$. We also chose to use a slightly looser bound of $(1+w^+/2) \leq w^+$ in \eqref{eq:aimd-bound} to simplify the required buffer size.

\section{Proof of Lemma~\ref{lm:num-flows-with-loss}}
\label{sec:rand-loss-proof}
The proof of Lemma~\ref{lm:num-flows-with-loss} is a standard application of a Chernoff bound.
\noindent \begin{proof}
First, we calculate $\E D(t)$,
\begin{equation}
    \E D(t) = np \leq \frac{n^2}{\fairUpper\BDP} \label{eq:exp-d-t}
\end{equation}
Note that $D(t) - \E D(t)$ is the sum of independent random variables, and each term is at most $1$ in absolute value since
$$|D_i(t) - \E D_i(t)| \leq \max(1-p, p) \leq 1.$$
Applying Azuma-Hoeffding, for $\eps > 0$ we have
\begin{align*}
    \P\inparen{D(t) - \E D(t) > \eps} &\leq \exp\inparen{-\frac{\eps^2}{2n}}.
\end{align*}
Plugging in \eqref{eq:exp-d-t} and $\eps = \sqrt{n}$, we have the desired result:
\begin{align*}
    \P\inparen{D(t) > \frac{n^2}{\BDP} + \sqrt{n}} &\leq \exp\inparen{-\frac{n}{2n}},\\
    &\leq \exp(-1/2).
\end{align*}
\end{proof}
We could get a higher probability bound, for instance for any $\delta > 0$, by increasing the upper bound on $D(t)$ to $\frac{n^2}{\fairUpper \BDP} + \sqrt{2n\ln(1/\delta)}$. This would change the assumption in Theorem~\ref{thm:our-sqrt-n}, and one could adapt the proof of Theorem~\ref{thm:our-sqrt-n} accordingly.

\end{document}